\def\n{\noindent}
\newtheorem{theorem}{Theorem}[section]
\newtheorem{lemma}[theorem]{Lemma}
\newtheorem{corollary}[theorem]{Corollary}
\newtheorem{observation}[theorem]{Observation}
\newtheorem{proposition}[theorem]{Proposition}
\begin{document}

\def\bubble#1#2)#3{
 \def\drawbubble##1##2{
   \expandafter\node\opts,draw=none]at(##1){##2};
 }
 \def\opts{[black}
 \def\defopts[##1](##2)##3{
   \def\opts{[##1}
   \drawbubble{##2}{##3}
 }
 \let\x=#1
 \if(\x
   \drawbubble{#2}{#3}
 \else\if[\x
   \defopts#1#2){#3}
 \else
   *** USE OF \backslash bubble DOES NOT MATCH ITS DEFINITION ***
 \fi\fi
}

\title{\bf Contracting Graphs to Split Graphs and Threshold Graphs}
\author{Leizhen Cai}
\author{Chengwei Guo}
\affil{\small Department of Computer Science and Engineering, The Chinese University of Hong~Kong, Hong~Kong~S.A.R., China}
\affil{\small \{lcai,cwguo\}@cse.cuhk.edu.hk}
\date{}
\maketitle

\begin{abstract}
\n{\bf Abstract.}
We study the parameterized complexity of {\sc Split Contraction} and {\sc Threshold Contraction}. In these problems we are given a graph $G$ and an integer $k$ and asked whether $G$ can be modified into a split graph or a threshold graph, respectively, by contracting at most $k$ edges. We present an FPT algorithm for {\sc Split Contraction}, and prove that {\sc Threshold Contraction} on split graphs, i.e., contracting an input split graph to a threshold graph, is FPT when parameterized by the number of contractions. To give a complete picture, we show that these two problems admit no polynomial kernels unless $NP\subseteq coNP/poly$.
\end{abstract}

\linespread{1}
\selectfont


\section{Introduction}

Graph modification problems constitute a fundamental and well-studied family of problems in algorithmic graph theory, many famous graph problems can be formulated as graph modification problems such as {\sc Clique}, {\sc Feedback Vertex Set}, and {\sc Minimum Fill-in}. A graph modification problem takes as input a graph $G$ and an integer $k$, and the question is whether $G$ can be modified to belong to a specified graph class, using at most $k$ operations of a certain specified type such as vertex deletion or edge deletion. The number $k$ of operations measures how close a graph is to such a specified class of graphs. Recently the study of modifying a graph by using operations of edge contraction has been initiated from the parameterized point of view, yielding several results for the {\sc $\Pi$-Contraction} problem.

\vskip 0.25cm
\begin{tabular}{l l}
\multicolumn{2}{l}{\sc $\Pi$-Contraction}\\
{\it Instance}: & Graph $G = (V,E)$, positive integer $k$.\\
{\it Question}: & Can we obtain a $\Pi$-graph (i.e. a graph belonging to class $\Pi$)\\
 & from $G$ by contracting at most $k$ edges?\\
{\it Parameter}: & $k$.\\
\end{tabular}
\vskip 0.25cm

A problem (with a particular parameter $k$) is \emph{fixed-parameter tractable} (FPT) if it can be solved in time $f(k)n^{O(1)}$ where $f(k)$ is a computable function depending only on $k$. Considering parameterization by the number of edge contractions used to modify graphs, the {\sc $\Pi$-Contraction} problem has been proved to be FPT when $\Pi$ is the class of bipartite graphs (Heggernes \emph{et al.}~\cite{heggernes11a}), the class of trees or paths (Heggernes \emph{et al.}~\cite{heggernes11c}), the class of planar graphs (Golovach \emph{et al.}~\cite{golovach12a}), the class of cliques (Cai \emph{et al.}~\cite{cai13} and Lokshtanov \emph{et al.}~\cite{lokshtanov13}). On the negative side, very recently two groups of authors (Cai \emph{et al.}~\cite{cai13} and Lokshtanov \emph{et al.}~\cite{lokshtanov13}) showed that {\sc Chordal Contraction} is \emph{fixed-parameter intractable}. It is then natural to ask whether the {\sc $\Pi$-contraction} problems are FPT for two well-known subclasses $\Pi$ of chordal graphs: split graphs and threshold graphs.

In this paper, we study the parameterized complexity of {\sc $\Pi$-Contraction} when $\Pi$ is the class of split graphs and when $\Pi$ is the class of threshold graphs. Their edge deletion versions known as {\sc Split Deletion} and {\sc Threshold Deletion}, asking whether an input graph can be modified into a split graph or a threshold graph, respectively, by deleting at most $k$ edges, are FPT and have polynomial kernels~\cite{cai96,guo07a,ghosh12}. The combination of split graphs, threshold graphs, and edge contractions has been studied in a closely related setting. Belmonte et al.~\cite{belmonte11a} showed that given a split graph $G$ and a threshold graph $H$, the problem of determining whether $G$ is contractible to $H$ is NP-complete, and it can be solved in polynomial time for fixed $|V(H)|$. Inspired by this, we consider the case that $H$ is an arbitrary threshold graph that is not a part of the input, and take as parameter the number of edge contractions instead of the size of the target graph $H$.\\

\n{\bf Our Contribution:} We show that {\sc Split Contraction} can be solved in $2^{O(k^2)}n^{O(1)}$ time. This result complements the FPT results of two other graph modification problems related to split graphs: {\sc Split Deletion} and {\sc Split Vertex Deletion}. Our algorithm starts by finding a large split subgraph and further is partitioned into two parts in terms of the clique size of this split subgraph. If the clique is large, we use a branch-and-search algorithm to enumerate edge contractions and reduce to the {\sc Clique Contraction} problem that is known to be FPT. Otherwise, there will be a large independent set in the input graph. We partition all vertices into bounded number of independent sets such that vertices in each set have the same neighbors, and then use reduction rules to reduce to a smaller graph. The ideas of reduction rules are applicable to obtain kernelization algorithms for other contraction problems such as {\sc Clique Contraction} and {\sc Biclique Contraction}.

We also obtain an $2^{O(k^2)}n^{O(1)}$ time algorithm for {\sc Threshold Contraction} when the input is a split graph. One motivation to study this problem is for considering an aspect of modification problems: contracting few edges to make all vertices in a graph obey some ordering (in this problem the ordering by neighborhood inclusion). We feel that it is of more interest to study the {\sc Threshold Contraction} problem for split graphs than for general graphs, since in the latter case the part of contracting input graphs to split graphs and the part of ordering vertices by edge contractions have lack of connection.

Furthermore, we prove that the above problems admit no polynomial kernels unless $NP\subseteq coNP/poly$ by polynomial-time reductions from {\sc Red-Blue Dominating Set} and {\sc One-Sided Dominating Set}, respectively, with polynomial bounds on the new parameter values. The incompressibility of {\sc Red-Blue Dominating Set} is provided by Dom \emph{et al.}~\cite{dom09}. We introduce {\sc One-Sided Dominating Set} and {\sc One-Sided Domatic Number} that were defined by Feige \emph{et al.}~\cite{feige02}, and show that these problems are incompressible. Such results might have application in obtaining kernelization lower bounds for further problems.


\section{Preliminaries}

\n{\bf Graphs:} We consider simple and undirected graphs $G=(V,E)$, where $V$ is the vertex set and $E$ is the edge set. Two vertices $u,v\in V$ are \emph{adjacent} iff $uv\in E$. A vertex $v$ is \emph{incident} with an edge $e$ iff $v\in e$, i.e., $v$ is an endpoint of $e$. The \emph{neighbor set} $N_G(v)$ of a vertex $v\in V$ in graph $G$ is the set of vertices that are adjacent to $v$ in $G$. We use $N_G[v]$ to denote the \emph{closed neighbor set} of $v$ in $G$ where $N_G[v]=N_G(v)\cup\{v\}$. For a set $X$ of vertices or edges in $G$, we use $G\setminus S$ or $G-X$ to denote the graph obtained by deleting $X$ from $G$. For a set of vertices $V'\subseteq V$, we denote by $E[V']$ the set of edges whose both endpoints are in $V'$. An \emph{$l$-path} (or $P_l$) is a path on $l$ vertices.

We study two hereditary graph classes. A graph $G$ is a \emph{split graph} if its vertex set can be partitioned into a clique $K$ and an independent set $I$, where $(K;I)$ is called a \emph{split partition} of $G$. The class of split graphs is characterized by a set of forbidden induced subgraphs: $\{2K_2,C_4,C_5\}$. A graph $G$ is a \emph{threshold graph} if there exist non-negative reals $r(v)$ for each $v\in V(G)$ and ``threshold'' $t$ such that for every vertex set $X\subseteq V(G)$, $X$ is an independent set iff $\Sigma_{v\in X}r(v)\leq t$. The class of threshold graphs is characterized by a set of forbidden induced subgraphs: $\{2K_2,C_4,P_4\}$.

\vskip 0.25cm
\n{\bf Edge Contraction:} The \emph{contraction} of edge $e=uv$ in $G$ removes $u$ and $v$ from $G$, and replaces them by a new vertex adjacent to precisely those vertices which were adjacent to at least one of $u$ or $v$. The resulting graph is denoted by $G/e$ or $G\cdot e$. For a set of edges $F\subseteq E(G)$, we write $G/F$ to denote the graph obtained from $G$ by sequentially contracting all edges from $F$.

For a graph $H$, if $H$ can be obtained from $G$ by a sequence of edge contractions, then $G$ is \emph{contractible} to $H$, or called \emph{$H$-contractible}. Let $V(H)=\{h_1,\cdots,h_l\}$. $G$ is $H$-contractible if $G$ has a so-called \emph{$H$-witness structure}: a partition of $V(G)$ into $l$ sets $W(h_1),\cdots,W(h_l)$, called \emph{witness sets}, such that each $W(h_i)$ induces a connected subgraph of $G$ and for any two $h_i,h_j\in V(H)$, there is an edge between $W(h_i)$ and $W(h_j)$ in $G$ iff $h_ih_j\in E(H)$. We obtain $H$ from $G$ by contracting vertices in each $W(h_i)$ into a single vertex.

\vskip 0.25cm
\n{\bf Parameterized Complexity:} A \emph{paramerized problem} $\mathcal{Q}$ is a subset of $\Sigma^*\times\mathbb{N}$ for some finite alphabet $\Sigma$. The second component is called the \emph{parameter}. The problem $\mathcal{Q}$ is \emph{fixed-parameter tractable} if it admits an algorithm deciding whether $(I,k)\in\mathcal{Q}$ in time $f(k)|I|^{O(1)}$, where $f$ is a computable function depending only on $k$.

A \emph{kernelization} of $\mathcal{Q}$ is a polynomial-time computable function that maps instance $(I,k)$ to another instance $(I',k')$ such that:
\begin{itemize}
\item $(I,k)\in\mathcal{Q}\Leftrightarrow (I',k')\in\mathcal{Q}$;
\item $|I'|,k'\leq g(k)$ for some computable function $g$.
\end{itemize}
If $g$ is a polynomial function then we say that $\mathcal{Q}$ admits a \emph{polynomial kernel}. A problem $\mathcal{Q}$ is \emph{incompressible} if it admits no polynomial kernel unless $NP\subseteq coNP/poly$.


\section{Contracting graphs to split graphs}

In this section, we consider the {\sc Split Contraction} problem: Given a graph $G$ and an integer $k$, can we obtain a split graph from $G$ by contracting at most $k$ edges?

We point out that this problem is NP-complete by reducing from another NP-complete edge contraction problem {\sc Clique Contraction}~\cite{cai13}: Given a graph $G$ and an integer $k$, can we modify $G$ into a clique by contracting at most $k$ edges? We construct a graph $G'$ from $G$ by adding an independent set of $k+2$ new vertices and making new vertices adjacent to all vertices in $G$. Observe that at least two of the new vertices are not involved in any edge contraction, then at least one of them belongs to the independent set of the resulting split graph, which implies that all the old vertices belong to the clique of the resulting graph. Thus $G'$ can be modified into a split graph using $k$ edge contractions iff there exists a set of $k$ edges in $G$ whose contraction makes $G$ into a clique.

\begin{theorem}\label{splitnpc}
{\sc Split Contraction} is NP-complete.
\end{theorem}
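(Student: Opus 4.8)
The plan is to establish membership in NP and then NP-hardness via a polynomial reduction from {\sc Clique Contraction}, which is NP-complete \cite{cai13}. Membership is routine: a candidate solution is a set $F\subseteq E(G)$ with $|F|\le k$; one contracts $F$ in polynomial time and verifies that $G/F$ is a split graph, e.g.\ by scanning all $4$- and $5$-element vertex subsets for the forbidden induced subgraphs $2K_2$, $C_4$, $C_5$, or by greedily constructing a split partition. Hence {\sc Split Contraction} is in NP.

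For hardness I would start from an instance $(G,k)$ of {\sc Clique Contraction}. Since contracting an edge never decreases the number of connected components, a disconnected graph is never contractible to a clique, so if $G$ is disconnected the reduction may simply output a fixed no-instance; thus assume $G$ is connected. Construct $G'$ from $G$ by adding a set $A$ of $k+2$ fresh, pairwise non-adjacent vertices, each made adjacent to every vertex of $V(G)$. The forward implication is easy: if $F\subseteq E(G)$, $|F|\le k$, and $G/F$ is a clique, then in $G'/F$ the image of $V(G)$ is a clique, $A$ is still independent, and every vertex of $A$ is adjacent to all of that clique, so $G'/F$ is split, with the image of $V(G)$ as clique side and $A$ as independent side.

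The backward implication is the crux, and I expect it to be the main obstacle. Suppose $G'$ admits an $H$-witness structure with $H$ split and at most $k$ contractions. A non-singleton witness set cannot be contained in $A$ (an independent set of size $\ge 2$ is disconnected), so it contains a vertex of $V(G)$ and therefore consumes at least as many contractions as it contains vertices of $A$; consequently at most $k$ vertices of $A$ lie in non-singleton witness sets, so at least two vertices $a_1,a_2\in A$ form singleton witness sets. As $a_1a_2\notin E(G')$, at most one of them lies in the clique side of a split partition of $H$, so after relabelling $a_1$ lies in the independent side; then every neighbour of $a_1$ in $H$ lies in the clique side, i.e.\ every witness set meeting $V(G)$ lies in the clique side of $H$, and hence the images of the vertices of $V(G)$ are pairwise adjacent in $H$. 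It remains to convert this into a clique-contraction of $G$ with at most $k$ contractions: one would like to simply restrict the witness structure to $V(G)$, but a witness set could a priori use a vertex of $A$ as an internal ``bridge'', so the remaining — and delicate — work is to argue, via a cleaning/exchange argument on the witness structure together with the connectedness of $G$, that the witness sets meeting $V(G)$ may be taken to lie entirely inside $V(G)$ and to induce connected subgraphs of $G$. Granting this, restricting the witness structure to $V(G)$ yields a partition of $V(G)$ into connected, pairwise-adjacent parts using at most $k$ contractions, so $(G,k)$ is a yes-instance of {\sc Clique Contraction}, and the reduction is complete.
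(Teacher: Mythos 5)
Your proof correctly identifies a subtlety that the paper's own two-sentence sketch does not even mention, but you stop short at the crucial step with ``Granting this,'' and never supply the cleaning argument, so as written the backward direction is incomplete. The gap is genuine: a solution $F$ for $G'$ can really use vertices of $A$ as bridges. For instance, if $G$ is the path $v_1v_2v_3$ and $k=2$, then $F=\{v_1a_1,\,v_3a_1\}$ with $a_1\in A$ is a valid $2$-contraction of $G'$ to a split graph, yet the witness set $\{v_1,v_3,a_1\}$ has $\{v_1,v_3\}$ disconnected in $G$, and simply splitting it into its $G$-components $\{v_1\},\{v_3\}$ destroys pairwise adjacency (these are non-adjacent in $G$). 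So neither ``restrict each witness set to $V(G)$'' nor ``take $G$-components of the restriction'' works directly, and something must be said.

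The gap is closable along the lines you gesture at, and here is the missing idea. Let $W_1,\dots,W_m$ be the witness sets meeting $V(G)$; as you argued they all lie on the clique side. Let $t=|A\cap(W_1\cup\cdots\cup W_m)|$, so $|F|\geq\sum_{i}(|W_i|-1)=|V(G)|+t-m$. Let $W_1,\dots,W_{m_0}$ be the ones contained entirely in $V(G)$; since each mixed $W_i$ uses at least one vertex of $A$, $m_0\geq m-t$. If $m_0=0$ then $|F|\geq|V(G)|$, and since $G$ is connected you may contract all of $G$ to a single vertex with $|V(G)|-1\leq k$ contractions. If $m_0\geq 1$, then $W_1,\dots,W_{m_0}$ are connected and pairwise adjacent \emph{in $G$} (any $G'$-edge between two sets inside $V(G)$ is a $G$-edge). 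Now greedily assign each remaining vertex of $V(G)$ to some $W_i$ with $i\leq m_0$ to which it is adjacent; connectedness of $G$ guarantees this terminates and keeps each part connected, while pairwise adjacency is preserved since parts only grow. The resulting clique contraction of $G$ costs $|V(G)|-m_0\leq|V(G)|-(m-t)\leq|F|\leq k$. With this paragraph inserted, your argument is correct and follows essentially the same construction as the paper's (which, it should be said, omits this verification entirely).
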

\vskip 0.25cm

We now present an FPT algorithm for {\sc Split Contraction} based on an $k^{O(k)}+O(m)$ time algorithm for {Clique Contraction}~\cite{cai13}.

Note that an $n$-vertex graph $G$ must contain an induced split subgraph of $(n-2k)$ vertices if $(G,k)$ is yes-instance of {\sc Split Contraction}, because $k$ edge contractions can affect at most $2k$ vertices. We start by finding an $(n-2k)$-vertex induced split subgraph $H$ in $2^{2k}n^{O(1)}$ time using a known algorithm for {\sc Split Vertex Deletion} (Ghosh \emph{et al.}~\cite{ghosh12}). Let $V_k=V(G)-V(H)$, and let $(K_H;I_H)$ be a split partition of $H$ where $K_H$ is a maximal clique and $I_H$ is an independent set. Here we first assume that $|K_H|>2k$ implying that at least one vertex in $K_H$ is not involved in any edge contraction, and will discuss the case for $|K_H|\leq 2k$ in the last part of the algorithm.

We branch out by contracting every possible set $E'\subseteq E[V_k]$ of at most $k$ edges and obtain the resulting instance $(G',k')$ where $G'=G/E'$ and $k'=k-|E'|$. For each resulting instance $(G',k')$, suppose that vertices $V_k$ are contracted to vertices $V'_k$ in $G'$.

\begin{proposition}
$(G,k)$ has a solution $S$ iff there exists a resulting instance $(G',k')$ such that $(G',k')$ has a solution $F=S-E'$ satisfying $F\cap E[V'_k]=\emptyset$.
\end{proposition}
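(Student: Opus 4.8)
The plan is to move solutions back and forth between $(G,k)$ and a branched instance $(G',k')$ using two elementary facts about edge contraction. First, contracting a set of edges does not depend on the order, and more precisely, if $G'=G/E'$ and $F'\subseteq E(G)$ is a lift of an edge set $F\subseteq E(G')$ (i.e.\ $F'$ projects onto $F$ under the quotient map $G\to G'$), then $G/(E'\cup F')=G'/F$. Second, since $E'\subseteq E[V_k]$, contracting $E'$ merges only vertices of $V_k$ among themselves; consequently an edge of $G$ meeting $V_k$ in at most one vertex does not become a loop and projects to an edge of $G'$ meeting $V'_k$ in at most one vertex, and conversely every edge of $G'$ meeting $V'_k$ in at most one vertex lifts back to such an edge of $G$.

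For the forward direction I would take a solution $S$ of $(G,k)$ and set $E':=S\cap E[V_k]$, which is one of the enumerated subsets and gives $G'=G/E'$, $k'=k-|E'|$. Let $F$ be the image of $S\setminus E'$ in $G'$. Since $S\setminus E'$ avoids $E[V_k]$, the second fact shows that no edge of it becomes a loop, so $|F|\le|S|-|E'|\le k'$, and every edge of $F$ meets $V'_k$ in at most one vertex, i.e.\ $F\cap E[V'_k]=\emptyset$. By the first fact $G'/F=G/(E'\cup(S\setminus E'))=G/S$ is a split graph, so (identifying $S-E'$ with its image $F$) we obtain a solution of $(G',k')$ of the required shape.

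For the converse, suppose some enumerated $E'$ produces $(G',k')$ with a solution $F$ satisfying $F\cap E[V'_k]=\emptyset$. By the second fact choose, for each $f\in F$, a lift $e_f\in E(G)$ meeting $V_k$ in at most one vertex, and put $F':=\{e_f:f\in F\}$; then $F'\cap E[V_k]=\emptyset$, $|F'|\le|F|$, and the image of $F'$ in $G'$ is $F$. Setting $S:=E'\cup F'$, the sets $E'$ and $F'$ are disjoint, so $|S|=|E'|+|F'|\le|E'|+k'=k$, and by the first fact $G/S=G'/F$ is a split graph; hence $S$ is a solution of $(G,k)$.

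The only step requiring real care is the second fact and the bookkeeping around it: one must verify that contracting edges inside $E[V_k]$ can only identify vertices of $V_k$ with one another, so that the projection and lifting between edges of $G$ with at most one endpoint in $V_k$ and edges of $G'$ with at most one endpoint in $V'_k$ is a genuine correspondence, and that under it the choice $E'=S\cap E[V_k]$ is exactly the portion of the solution living inside $V_k$ (so that what is left really does avoid $E[V'_k]$). The remaining ingredients are the associativity of edge contraction and the size count $|S|=|E'|+|F'|$.
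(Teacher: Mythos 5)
The paper states this proposition without proof, treating it as a routine consequence of the branching scheme; your argument correctly supplies the verification. The two ingredients you isolate are exactly what is needed: (i) associativity of contraction, so that $G/(E'\cup F')$ equals $(G/E')/\overline{F'}$ where $\overline{F'}$ is the image of $F'$ in $G'=G/E'$, and (ii) the observation that contracting a subset of $E[V_k]$ only merges vertices of $V_k$ with one another, so that edges of $G$ with at most one endpoint in $V_k$ correspond to edges of $G'$ with at most one endpoint in $V'_k$ and never become loops. Choosing $E'=S\cap E[V_k]$ in the forward direction, and lifting each edge of $F$ to an edge of $G$ with at most one endpoint in $V_k$ in the converse, then yields the equivalence with the size bounds $|F|\le k-|E'|=k'$ and $|E'\cup F'|\le k$. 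Two minor remarks: in the converse the lift map is injective on edges, so $|F'|=|F|$ exactly (your $\le$ is harmless but overcautious), and the paper's phrase ``$F=S-E'$'' is notational shorthand for the image of $S-E'$ under the quotient $G\to G'$, which is exactly how you read it.
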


Suppose that $(G',k')$ is a yes-instance and has a solution $F$. Thus the graph $G'/F$ is a split graph and has a split partition $(K_F;I_F)$. We further branch on at most $3^{|V'_k|}$ ways to find a partition $V'_k=R\cup K_p\cup I_p$ such that $R$ consists of exactly those vertices in $V'_k$ that are incident with some edges in $F$, $K_p\subseteq K_F$ induces a clique, and $I_p\subseteq I_F$ induces an independent set. See Fig. 1 for an illustration. It is easy to see that $|R|\leq k'$.

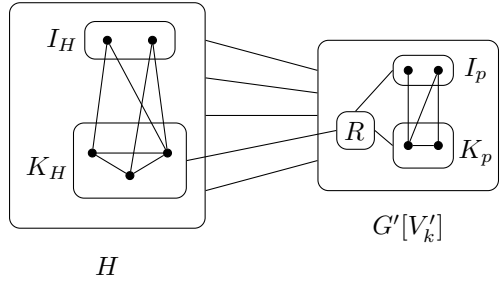
\begin{figure}[H]
\centering
\begin{tikzpicture}
  \tikzset{every node/.style={circle,draw,inner sep=0pt,minimum size=3pt}}
  \tikzstyle{cloud} = [draw, ellipse,fill=red!20, node distance=3cm, minimum height=2em]
  
  \node[draw,rectangle,rounded corners,text width=2.6cm,minimum height=3cm] at(2,3) (h){};
  \node[draw,rectangle,rounded corners,text width=1.2cm,minimum height=0.5cm] at(2.3,4) (ih){};
  \node[draw,rectangle,rounded corners,text width=1.5cm,minimum height=1cm] at(2.3,2.4) (kh){};
  \node[draw,rectangle,rounded corners,text width=2.4cm,minimum height=2cm] at(6,3) (vk){};
  \node[draw,rectangle,rounded corners,text width=0.5cm,minimum height=0.5cm] at(5.3,2.8) (r){$\;R\;$};
  \node[draw,rectangle,rounded corners,text width=0.8cm,minimum height=0.4cm] at(6.2,3.6) (ip){};
  \node[draw,rectangle,rounded corners,text width=0.8cm,minimum height=0.6cm] at(6.2,2.6) (kp){};
  \node[fill=black] at (2,4) (a1){};
  \node[fill=black] at (2.6,4) (a2){};
  \node[fill=black] at (1.8,2.5) (b1){};
  \node[fill=black] at (2.3,2.2) (b2){};
  \node[fill=black] at (2.8,2.5) (b3){};
  \draw(b1)--(b2)--(b3)--(b1);
  \draw(b1)--(a1)--(b3);
  \draw(b2)--(a2)--(b3);
  \node[fill=black] at (6,3.6) (c1){};
  \node[fill=black] at (6.4,3.6) (c2){};
  \node[fill=black] at (6,2.6) (d1){};
  \node[fill=black] at (6.4,2.6) (d2){};
  \draw(c1)--(d1)--(d2)--(c2)--(d1);
  \bubble(1.4,4){$I_H$}
  \bubble(1.2,2.3){$K_H$}
  \bubble(6.9,3.6){$I_p$}
  \bubble(6.9,2.5){$K_p$}
  \bubble(2,1){$H$}
  \bubble(6,1.5){$G'[V'_k]$}
  \begin{scope}
    \foreach \i in {-2,0,1,2}{
      \draw ([yshift=\i*0.5cm]h.east)--([yshift=\i*0.3cm]vk.west);
    }
  \end{scope}
  \draw(r.west)--(kh.east);\draw(r.north)--(ip.west);\draw(r.east)--(kp.west);

\end{tikzpicture}
\caption{An illustration of the structure of $G'$}
\end{figure}

It is clear that those vertices in $I_H$ that are adjacent to some vertices in $I_p$ must be in the clique $K_F$ of the target split graph, and other vertices in $I_H$ could be in the independent set $I_F$ after contractions. The following proposition states that almost all vertices in $K_F$ are finally in the clique of some target graph.

\begin{proposition}
If $(G',k')$ is a yes-instance, then it has a solution $F$ such that there is at most one vertex in $K_H$ that is finally in the independent set $I_F$.
\end{proposition}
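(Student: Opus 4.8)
The plan is to fix a solution $F$ of $(G',k')$ with $F\cap E[V'_k]=\emptyset$, together with a split partition $(K_F;I_F)$ of $G'/F$, chosen among all such pairs so that the number of vertices of $K_H$ whose witness set lies in $I_F$ is as small as possible, and then to argue that this number is at most $1$. First I would record two simple facts about an arbitrary such pair $(F,(K_F;I_F))$. Since $|K_H|>2k\ge 2k'\ge 2|F|$ and the edges of $F$ are incident with at most $2|F|$ vertices, some $v^*\in K_H$ is incident with no edge of $F$, so $\{v^*\}$ is itself a witness set of $G'/F$. Moreover, because $K_H$ is a clique of $G'$, any two distinct witness sets meeting $K_H$ are joined by an edge of $G'$, so the images of the vertices of $K_H$ induce a clique in $G'/F$; hence at most one such image lies in $I_F$, and consequently all vertices of $K_H$ that end up in $I_F$ lie in one and the same witness set $q$. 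It therefore suffices to rule out $|q\cap K_H|\ge 2$.

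So suppose $q\in I_F$ with $A:=q\cap K_H$ and $|A|\ge 2$. For any $u\in A$ we have $uv^*\in E(G')$, so $q$ and $\{v^*\}$ are adjacent in $G'/F$ and hence $v^*$ lies in $K_F$. Regarded as a connected component of $(V(G'),F)$, the set $q$ contains at least $|q|-1\ge 1$ edges of $F$, so fix such an edge $e_0\subseteq q$ and a vertex $u\in A$ with $uv^*\in E(G')$, and set $F':=(F\setminus\{e_0\})\cup\{uv^*\}$. Then $|F'|\le|F|\le k'$, and $F'\cap E[V'_k]=\emptyset$ since $u,v^*\in V(H)$. Deleting $e_0$ breaks $q$ into the part $q_1$ containing $u$ and a possibly empty part $q_2$, and adding $uv^*$ fuses $q_1$ with $\{v^*\}$; thus the witness structure of $G'/F'$ arises from that of $G'/F$ by replacing the witness sets $q$ and $\{v^*\}$ with $q_1\cup\{v^*\}$ and $q_2$, all other witness sets being untouched.

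The remaining, and most delicate, step is to check that $G'/F'$ is a split graph admitting a partition with strictly fewer vertices of $K_H$ on the independent side; I would use the candidate partition that keeps $K_F\setminus\{\bar v^*\}$ on the clique side together with the image of $q_1\cup\{v^*\}$, and keeps $I_F\setminus\{\bar q\}$ on the independent side together with the image of $q_2$ (when $q_2\ne\emptyset$). What makes this go through is precisely that $v^*$ is a singleton witness set lying in $K_F$: the neighbourhood of the image of $q_1\cup\{v^*\}$ in $G'/F'$ contains the image of every witness set that was adjacent to $\{v^*\}$ in $G'/F$, hence all of $K_F\setminus\{\bar v^*\}$, so the clique side stays a clique; dually, any witness set adjacent to the image of $q_2$ in $G'/F'$ was adjacent to $q$ in $G'/F$ and therefore lay in $K_F$, so the independent side stays independent. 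Since $u\in A$ now sits on the clique side, this partition has fewer vertices of $K_H$ on the independent side, contradicting the minimal choice. I expect essentially all the genuine work to lie in this last verification — confirming that splitting one witness set into two, and the degenerate case $q_2=\emptyset$, disturbs no other adjacencies — rather than in the structural setup around $v^*$ and $q$.
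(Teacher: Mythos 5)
Your proof is correct and takes essentially the same route as the paper's: identify the single independent-side witness set $q$ containing all the $K_H$-vertices that end up in $I_F$, exhibit a $K_H$-vertex whose witness set lies on the clique side, and swap one spanning-tree edge of $q$ for an edge joining part of $q$ to that clique-side vertex. The paper phrases this as a one-step modification that reduces the count by one (implicitly to be iterated), whereas you package the identical exchange as a minimality argument and additionally insist that $v^*$ be an uncontracted singleton in $K_F$, which tightens the verification that the modified partition is still a split partition.
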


\begin{proof}
For an arbitrary solution $F$ of $(G',k')$, if there are at least two vertices $a,b\in K_H$ that are finally in the independent set $I_F$, they must be contained in a same witness set $W_F$ because they are adjacent originally. Thus the number of such vertices is bounded by $k+1$, implying that there is a vertex $u\in K_H$ that is finally in the clique $K_F$ since $|K_H|>2k$.

By the definition of witness set we see that the induced subgraph $G'[W_F]$ has a spanning tree whose edges are entirely in $F$. Thus we can remove one edge from $F$ to separate the vertex $b$ from the witness set $W_F$, and add an edge $ub$ into $F$ which implies that $b$ is adjacent to all vertices in $K_F$ after contracting $F$. It is easy to see that the resulting set obtained from $F$ is also a solution of $(G',k')$ and $b$ is no longer in the independent set.
\end{proof}

Our algorithm further considers the following two cases. It outputs ``YES'' if either case outputs ``YES''.

\vskip 0.25cm
{\em {\bf Case 1.} There is no vertex in $K_H$ that is finally in $I_F$.}
Let $T_1$ be a subset of $I_H$ containing exactly those vertices that are adjacent to some vertices in $I_p$. It is clear that all vertices in $T_1$ are finally in $K_F$ after contractions. Therefore every vertex in $T_1$ must be involved in at least one edge contraction, and we have $|T_1|\leq k'$.

Remember that $R$ consists of exactly those vertices in $V'_k$ that are incident with some edges in $F$, which implies that vertices in $R$ are merged into $K_H\cup I_H$ after contracting some edges in $F$. It is easy to see that in order to contract $G'$ into a split graph, it is better to merge vertices $R$ into $K_H\cup T_1$ than into $I_H-T_1$. Thus we may assume that all vertices in $R$ are finally in $K_F$ after contractions. Our goal becomes to check whether $T_1\cup K_H\cup R\cup K_p$ induces a clique after contracting at most $k'$ edges in $G'$.

Since $|K_H|>2k$, there exists a vertex $u\in K_H$ that is not involved in any edge contraction. Obviously $u$ is adjacent to all vertices in $K_F$. We can obtain an edge set $F_1$ from $F$ by removing every edge of $F$ whose endpoints are both outside $K_H$, and replacing every edge $ab\in F$ that $a\in I_H-T_1,b\in K_H$ by an edge $ub$. Since contracting such edge $ab$ only affects the vertex $b$ which can be merged into the clique $K_F$ by contracting $ub$, it is clear that $F_1$ is also a solution set of $G'$ consisting of edges whose one endpoint is in $K_H$ and another endpoint is in $T_1\cup K_H\cup R\cup K_p$.

\begin{proposition}
$(G',k')$ has a solution set that is entirely contained in $G'[T_1\cup K_H\cup R\cup K_p]$ if it is a yes-instance for Case 1.
\end{proposition}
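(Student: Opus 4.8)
The plan is to take an arbitrary solution $F$ of the yes-instance $(G',k')$ and transform it into exactly the set $F_1$ described in the paragraph preceding the statement, then check that $F_1$ lies inside $G'[T_1\cup K_H\cup R\cup K_p]$ and that it is still a solution.

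First I would record the structural facts that make the rerouting legal. Since $R$ is precisely the set of vertices of $V'_k$ incident with an edge of $F$, no edge of $F$ touches $K_p\cup I_p$, so $F\subseteq E[K_H\cup I_H\cup R]$ and the only edges to eliminate are those incident with a vertex of $I_H-T_1$. Each $b\in I_H$ has all its $G'$-neighbours in $K_H\cup V'_k$, and, being outside $T_1$, none in $I_p$; since $F$ has no edge inside $V'_k$, every edge of $F$ at such a $b$ runs to $K_H\cup R$. Finally, because $|K_H|>2k\ge 2k'$ there is a vertex $u\in K_H$ incident with no edge of $F$; it is a singleton witness set, lies in $K_F$, and is adjacent in $G'$ to all of $K_H$ and, after contracting $F$, to every vertex of $K_F$. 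Together with the Case~1 hypothesis (no vertex of $K_H$ ends up in $I_F$), the earlier reduction (every vertex of $R$ ends up in $K_F$), and the definition of $T_1$, all of $K_H\cup T_1\cup R\cup K_p$ is destined for the clique side.

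Next comes the surgery: delete from $F$ every edge with both endpoints outside $K_H$, and replace every surviving edge $ab\in F$ with $a\in I_H-T_1$, $b\in K_H$ by $ub$. Call the result $F_1$. Then $|F_1|\le|F|\le k'$, and every edge of $F_1$ has one endpoint in $K_H$ and the other in $K_H\cup T_1\cup R$, so $F_1\subseteq E[T_1\cup K_H\cup R\cup K_p]$, which is the asserted containment. For validity I would exhibit a split partition of $G'/F_1$ whose clique side is the set of images of the vertices destined for $K_F$: a witness set of $F$ that met $K_H$ stays connected after the surgery because two $K_H$-neighbours of a deleted $I_H-T_1$ vertex are re-joined through $u$; a dissolved witness set consisted only of $I_H\cup R$ vertices, and each of its $R$-vertices is absorbed into $u$'s part using that $u$ dominates $K_F$; and each vertex of $(I_H-T_1)\cup I_p$ becomes a singleton witness set whose $G'/F_1$-neighbourhood lies in this clique, so it can be placed on the independent side.

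The step I expect to be the main obstacle is this last verification: ensuring that every $R$-vertex — and every vertex left behind in a dissolved $I_H$-witness set — that previously reached $K_H$ only through a now-deleted vertex of $I_H-T_1$ still ends up in a cluster adjacent to everything it must be adjacent to. This is exactly where one leans on $u$ being adjacent to all of $K_F$ (a consequence of $|K_H|>2k$) together with the standing assumption that all of $R$ is finally in $K_F$; it may be cleanest to first normalise $F$ so that each vertex of $R$ is joined in $F$ directly to some vertex of $K_H\cup T_1$ and only then reroute.
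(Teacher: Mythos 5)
Your construction of $F_1$ — deleting edges of $F$ with both endpoints outside $K_H$, rerouting each edge $ab$ with $a\in I_H-T_1$, $b\in K_H$ to $ub$, and exploiting that $u\in K_H$ is untouched and dominates $K_F$ — is exactly the paper's argument, and your structural preamble ($F\subseteq E[K_H\cup I_H\cup R]$, no $F$-edge inside $V'_k$, every $I_H-T_1$ vertex's $F$-edges run to $K_H\cup R$) makes explicit the facts the paper uses implicitly. You also correctly flag the one genuinely delicate point, namely that deleting an $R$--$T_1$ (or $R$--$I_H$) edge of $F$ can orphan an $R$-vertex or a $T_1$-vertex as a singleton that is no longer adjacent to the whole clique; the paper passes over this with ``it is clear that $F_1$ is also a solution set,'' relying on the earlier normalisations (all of $R$ merged into $K_H\cup T_1$, and $R$ witness sets touching $K_H$), whereas your suggested pre-normalisation of $F$ so that each $R$-vertex is joined directly to $K_H$ before rerouting is the right way to actually discharge that step.
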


By the above result, we first find the set $T_1$ in polynomial time, and then apply the FPT algorithm for {\sc Clique Contraction}~\cite{cai13} to determine whether $G'[T_1\cup K_H\cup R\cup K_p]$ can be made into a clique by at most $k'$ edge contractions. If it outputs ``YES'', then $(G',k')$ is a yes-instance. The running time is bounded by $k'^{O(k')} +O(m)$.

\vskip 0.25cm
{\em {\bf Case 2.} There is exactly one vertex in $K_H$ that is finally in $I_F$.}
We check whether $(G',k')$ is a yes-instance with the assumption $|K_H\cap I_F|=\{w\}$ for every $w\in K_H$ that is not adjacent to any vertex in $I_p$.

Let $T_2$ be a subset of $I_H$ containing exactly those vertices that are adjacent to some vertices in $I_p\cup\{w\}$. Since each vertex in $T_2$ is finally in $K_F$ and thus is involved in some edge contraction, we have $|T_2|\leq k'$. Our goal is to check whether $T_2\cup (K_H-\{w\})\cup R\cup K_p$ induces a clique after contracting at most $k'$ edges in $G'$. Similar to Case 1, we have the following proposition.

\begin{proposition}
$(G',k')$ has a solution set that is entirely contained in $G'[T_2\cup (K_H-\{w\})\cup R\cup K_p]$ if it is a yes-instance for Case 2.
\end{proposition}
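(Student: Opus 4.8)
\n The plan is to prove this by the same rewriting argument used for its Case~1 counterpart, the one new feature being that the single vertex $w$, formerly in the clique $K_H$, now sits on the independent side of the target partition and must be carried along as a fixed ``defect''. So I would start from an arbitrary solution $F$ of $(G',k')$ realising the Case~2 situation: $G'/F$ is a split graph with split partition $(K_F;I_F)$, $K_H\cap I_F=\{w\}$, $F\cap E[V'_k]=\emptyset$, and $R$ is exactly the set of vertices of $V'_k$ incident with an edge of $F$; as already argued for Case~1 I may assume every vertex of $R$ lands in $K_F$. Writing $A:=T_2\cup(K_H-\{w\})\cup R\cup K_p$, I would first record three structural facts. (i) All of $A$ lands in $K_F$: indeed $K_H-\{w\}\subseteq K_F$ and $K_p\subseteq K_F$ by the branching, $R\subseteq K_F$ by assumption, and $T_2\subseteq K_F$ because every vertex of $T_2$ is adjacent to a vertex of $I_p\cup\{w\}\subseteq I_F$. (ii) The complementary set $V(G')\setminus A=\{w\}\cup(I_H-T_2)\cup I_p$ induces an independent subgraph of $G'$: $I_H$ is independent, $w$ has no neighbour in $I_p$ by the choice of $w$, and every vertex of $I_H$ adjacent to $I_p\cup\{w\}$ was placed in $T_2$. (iii) As a consequence of (i) and (ii), the witness set of $w$ is $\{w\}$ and $F$ has no edge incident with $w$.

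\n Next I would perform the same surgery on $F$ as in the Case~1 proof. Because $|K_H|>2k\ge 2k'$ there is a vertex $u\in K_H-\{w\}$ incident with no edge of $F$; as a singleton witness set of $K_F$ it is adjacent in $G'$ to every vertex of $G'$ that ends up in $K_F$, and — being in $K_H$ — to every vertex of $K_H-\{w\}$. I would transform $F$ into a set $F_2$ of no more edges by replacing each edge $ab\in F$ with $a\in I_H-T_2$, $b\in K_H-\{w\}$ by the edge $ub$, and rerouting the remaining ways in which a vertex outside $A$ is dragged into a clique witness set so that they pass through $u$ and the clique $K_H-\{w\}$ instead. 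The upshot is that the vertices of $(I_H-T_2)\cup I_p$ and $w$ are released and, being pairwise non-adjacent by (ii), may form singleton witness sets of $I_F$, while the witness sets covering $A$ remain connected inside $G'[A]$ and pairwise adjacent; then $F_2\subseteq E[A]$, $|F_2|\le|F|\le k'$, contracting $F_2$ makes $A$ a clique and leaves the rest independent, so $G'/F_2$ is split, which is the claim.

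\n The step I expect to be the main obstacle is precisely the bookkeeping that is implicit in the Case~1 proof: checking that after the replacements and reroutings every witness set destined for $K_F$ is still connected when viewed inside $G'[A]$, and that any two of them are still joined by an edge of $G'[A]$. Funnelling every ``repair'' through the single fixed vertex $u$ is what keeps this under control, and it is here, and only here, that the hypothesis $|K_H|>2k$ is really used — note it must now supply such a $u$ inside $K_H-\{w\}$ rather than inside $K_H$, since $w\notin K_F$. The cases deserving extra care are a vertex of $R$ whose only incident $F$-edge joins it to $I_H-T_2$ (it must still be tied back into $A$), a $K_F$-witness set meeting $A$ only in $T_2\cup R$ whose adjacencies to the other clique vertices all run through released vertices, and the degenerate possibility $k'=k$ with $F$ already a perfect matching inside $K_H-\{w\}$ and $R=\emptyset$; each of these is handled by a further small rerouting or, in the last case, settled directly.
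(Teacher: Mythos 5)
Your proposal is correct and follows exactly the approach the paper intends: the paper gives no explicit proof here, saying only ``Similar to Case 1,'' and your argument is the natural adaptation of the Case~1 surgery, with the one genuine change being that the free vertex $u$ must now be sought in $K_H-\{w\}$. The degenerate case you flag ($k'=k$, $|K_H|=2k+1$, $F$ a perfect matching on $K_H-\{w\}$) is a real corner left implicit by the paper's $|K_H|>2k$ hypothesis, and you resolve it correctly by noting that $F$ already lies in $E[A]$ there.
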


We also apply the $k'^{O(k')} +O(m)$ time algorithm for {\sc Clique Contraction}~\cite{cai13} to determine whether $G'[T_2\cup (K_H-\{w\})\cup R\cup K_p]$ can be made into a clique by contracting at most $k'$ edges. If it outputs ``YES'', then $(G',k')$ is a yes-instance.

\vskip 0.25cm
Combing Case 1 and Case 2, we can decide whether a resulting instance $(G',k')$ is a yes-instance in $k^{O(k)}+O(m)$ time when $|K_H|>2k$.\\

Furthermore, we deal with the remaining case: $|K_H|\leq 2k$. We partition the vertex set $V(G)$ into disjoint sets $X_1,\cdots,X_d$ such that each $X_i$ induces a maximal independent set satisfying that the vertices in $X_i$ have the same sets of neighbors in $G$. This procedure is equivalent to partitioning the complement graph $\overline{G}$ into critical cliques, which can be done in linear time~\cite{lin00,chen10a}. A \emph{critical clique} $K$ in a graph is a clique such that all vertices in $K$ have the same closed neighbor sets, and $K$ is maximal under this property. It has been proved that all vertices in a graph can be uniquely partitioned into groups such that each group induces a critical clique~\cite{lin00,chen10a}. We now use the following reduction rules to obtain a smaller instance:

\vskip 0.25cm
\n{\bf Rule 1.}\quad If $d>2^{4k}+4k$, then output ``NO''.

\vskip 0.25cm
\n{\bf Rule 2.}\quad If there are more than $2k+5$ vertices in $X_i$ for some $i$, then arbitrarily retain $2k+5$ vertices among them and remove others in $X_i$ from $G$.\\

It is clear that applying these reduction rules requires linear time. We now show the correctness of Rule 1 and Rule 2.

\vskip 0.25cm
\begin{lemma}
Rule 1 and Rule 2 are correct.
\end{lemma}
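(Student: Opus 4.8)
The plan is to treat the two rules separately. For Rule~1 I would argue the contrapositive: if $(G,k)$ is a yes-instance then $d\le 2^{4k}+4k$. In the present case $|K_H|\le 2k$ and $|V_k|=2k$, so $I_H$ is an independent set of $G$ with $|I_H|\ge(n-2k)-2k=n-4k$, and every $v\in I_H$ has all its $G$-neighbors outside $I_H$, i.e. inside $K_H\cup V_k$, a set of at most $4k$ vertices. Hence there are at most $2^{4k}$ possible neighborhoods among vertices of $I_H$, and two vertices of $I_H$ with the same neighborhood are non-adjacent with equal neighbor set, so they lie in the same critical clique of $\overline G$, i.e. in the same $X_i$. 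Thus $I_H$ meets at most $2^{4k}$ of the $X_i$, and the remaining at most $4k$ vertices of $V(G)\setminus I_H$ lie in at most $4k$ further $X_i$; summing gives $d\le 2^{4k}+4k$, so if $d>2^{4k}+4k$ the instance is indeed a no-instance and ``NO'' is correct.

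For Rule~2 write $G'=G-D$ with $D\subseteq X_i$ and $|X_i\setminus D|=2k+5$. The easy direction ($(G',k)$ yes $\Rightarrow$ $(G,k)$ yes) I would do as follows. Take a solution $S'$ of size $\le k$ and a witness structure of the split graph $G'/S'$ with split partition $(K',I')$. Since $S'$ touches at most $2k$ vertices, at least five retained twins of $X_i\setminus D$ sit in singleton witness sets; being pairwise non-adjacent, at most one is on the clique side, so at least one such twin $v$ is on the independent side. Set $N:=N_G(v)$; since $v$ is independent-side, no independent-side witness set meets $N$, and as each $d\in D$ has $N_G(d)=N$ with $N\cap D=\emptyset$, we may add each $d$ back as its own singleton witness set on the independent side without destroying the split partition or using any extra contraction. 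Hence $(G,k)$ is a yes-instance.

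The hard direction ($(G,k)$ yes $\Rightarrow$ $(G',k)$ yes) is where I expect the real work. Let $S$ be a solution, $W$ a witness structure of $G/S$ with split partition $(K,I)$; as above pick an untouched retained twin $v\in X_i\setminus D$ on the independent side, so $\{v\}$ is a witness set. Delete $D$ from every witness set (sets entirely inside $D$ simply disappear, harmlessly). Each surviving independent-side witness set that lost a vertex of $D$ may break into pieces, but a short argument shows every piece still contains a vertex of $N$ (a piece could separate from the remainder of its old set only through a $D$-vertex, whose neighbors all lie in $N$); since an independent-side witness set has no edge to any other independent-side set, each piece can simply be kept as its own independent-side witness set. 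On the clique side, let $W(h_1),\dots,W(h_t)$ be all clique-side witness sets that contained a vertex of $D$, and merge $(W(h_1)\setminus D)\cup\cdots\cup(W(h_t)\setminus D)\cup\{v\}$ into a single witness set $M$; each $W(h_j)\setminus D$ meets $N$, $v$ is adjacent to all of $N$ (which survives since $N\cap D=\emptyset$), so $M$ is connected, and for any other clique-side witness set $W(g)$ its edge to each $W(h_j)$ either survives in $W(h_j)\setminus D$ or forces $W(g)\cap N\neq\emptyset$ and hence an edge to $v$, so $M$ stays adjacent to all remaining clique-side sets, leaving the quotient a split graph. Finally one checks the contraction count does not grow: replacing $W(h_1),\dots,W(h_t)$ and $\{v\}$ by $M$ changes the total by $t-|D\cap\bigcup_jW(h_j)|\le0$ because the $W(h_j)$ are disjoint and each meets $D$, while trimming and splitting the other witness sets only decreases the count.

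The main obstacle, then, is precisely this last transformation of the witness structure: the connectivity claims for the fragmented witness sets and the verification that the deletion of $D$ breaks no clique--clique or independent--independent adjacency. This is exactly the place where one uses that the vertices of $X_i$ are false twins with identical neighborhoods, and where the slack of retaining $2k+5$ rather than merely $2k$ vertices is what guarantees an untouched reserved twin $v$ on the independent side in both directions.
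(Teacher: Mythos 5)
Your Rule~1 argument is the paper's argument: $I_H$ has size at least $n-4k$ and all its neighbors lie in the set $K_H\cup V_k$ of size at most $4k$, so $I_H$ falls into at most $2^{4k}$ twin classes and the remaining at most $4k$ vertices contribute at most $4k$ more, whence $d\le 2^{4k}+4k$.

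For Rule~2 you take a genuinely different, and noticeably heavier, route: you open up the witness structure, whereas the paper never does. For the direction ``$G$ yes $\Rightarrow G^*$ yes'' (your ``hard'' direction) the paper works on the solution set $S$ itself: every deleted twin $a$ touched by $S$ is replaced by a surviving twin $b\in Y_j$ untouched by $S$, producing $S'\subseteq E(G^*)$; since $a$ and $b$ are false twins there is an automorphism of $G$ exchanging them, so $G/S'\cong G/S$ is a split graph, and $G^*/S'$ is an induced subgraph of $G/S'$, hence split by heredity. For ``$G^*$ yes $\Rightarrow G$ yes'' (your ``easy'' direction) the paper uses the forbidden-subgraph characterization $\{2K_2,C_4,C_5\}$: if $G/S^*$ contained a forbidden induced $D$, one replaces each deleted vertex of $D$ by a surviving twin outside $V(D)$ and untouched by $S^*$, which needs $|Y_j|\ge 2k+|V(D)|$ --- this is exactly where $2k+5$ comes from, $5$ being $|C_5|$. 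Your surgery only ever needs a couple of untouched independent-side twins, so $2k+2$ would already suffice; your proof does not explain the $+5$. What the paper's route buys, besides brevity, is portability: the same ``swap a deleted vertex for a spare twin inside a hypothetical obstruction'' works verbatim for any hereditary target class with small forbidden induced subgraphs, with the retained budget set to $2k$ plus the order of the largest obstruction.

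Two spots in your surgery argument deserve a second look. In the hard direction you worry that an independent-side witness set may fragment when $D$ is deleted and argue each fragment still meets $N$; in fact no such fragments can arise. If $\{v\}$ is an independent-side singleton, any other independent-side witness set $W$ must avoid $N$ (else $W$ would be adjacent to $\{v\}$), so a $D$-vertex $d\in W$, whose neighborhood is exactly $N$, would be isolated in $G[W]$, forcing $W=\{d\}$. Hence the only independent-side sets touched by deleting $D$ vanish outright, and there is nothing to reattach. Second, when $t=0$ your construction literally sets $M=\{v\}$ and moves it to the clique side, which may break the split partition; in that case $\{v\}$ should simply be left where it is. Both are easy patches rather than real gaps, but they do show the witness-surgery approach demands more bookkeeping than the paper's twin-swapping.
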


\begin{proof}
Since $|K_H|\leq 2k$, we have $|V_k\cup K_H|\leq 4k$ and $|I_H|\geq n-4k$. Note that vertices in $I_H$ have at most $2^{4k}$ different connection configurations to vertices $V_k\cup K_H$. Thus $I_H$ can be partitioned into at most $2^{4k}$ maximal independent sets such that vertices in each set have the same neighbors in $G$. Including vertices of $V_k\cup K_H$, the number $d$ is bounded by $2^{4k}+4k$ if $(G,k)$ is a yes-instance. Thus Rule 1 is correct.

Moreover, we prove that the input graph $G$ has a $k$-solution iff the graph $G^*$ obtained after one application of Rule 2 has a $k$-solution. Let $Y_i$ be the set of remaining vertices in $X_i$ for $i=1,\cdots, d$.

Suppose that $G$ has a solution $S\subseteq E(G)$. For every vertex $a$ that is incident with edges in $S$ and is removed after applying Rule 2, there exists $1\leq j\leq d$ such that $a\in X_j-Y_j$. Note that $|Y_j|=2k+5>2k$, there is another vertex $b\in Y_j$ that is not incident with any edge in $S$. We replace all edges $\{ac\in S:c\in V(G)\}$ by $\{bc:ac\in S\}$ in $S$ for every such $a$, and obtain a set $S'\subseteq E(G^*)$. It is clear that $G^*/S'$ is an induced subgraph of $G/S'$ because $S'\subseteq E(G^*)$. Since $N_G(a)=N_G(b)$ for every such $a$, it is easy to see that $G/S'$ is isomorphic to $G/S$ that is a split graph. Therefore $G^*/S'$ is a split graph.

Conversely, suppose that $G^*$ has a solution $S^*$, i.e., $G^*/S^*$ is a split graph. We claim that $G/S^*$ is also a split graph. Towards a contradiction, we assume that $G/S^*$ contains an induced subgraph $D$ isomorphic to one graph of $\{2K_2,C_4,C_5\}$. For every vertex $a$ that is contained in $V(D)$ and is removed after applying Rule 2, $a$ is in $X_j-Y_j$ for some $j$. Since $S^*\subseteq E(G^*)$, $a$ is not incident with any edge in $S^*$. Note that $|Y_j|=2k+5\geq 2k+|V(D)|$, there is a vertex $b\in Y_j$ that is not contained in $V(D)$ and is not incident with any edge in $S^*$. We replace $a$ by $b$ in $V(D)$ for every such $a$, and obtain an induced subgraph $D'$ whose vertices are all included in $V(G^*/S^*)$. Since $N_G(a)=N_G(b)$ for every such $a$, it is easy to see that $D'$ is isomorphic to $D$, contradicting to the fact that $G^*/S^*$ is a split graph that is $D$-free. Thus $G/S^*$ is $\{2K_2,C_4,C_5\}$-free, and $S^*$ is a solution of $G$.
\end{proof}

\vskip 0.25cm
After applying Rule 1 and Rule 2, we reduce $G$ into a graph of $O(2^{4k}k)$ vertices, implying that the problem can be solved in $2^{O(k^2)}+O(m)$ time using exhaustive search if $|K_1|\leq 2k$.

\vskip 0.25cm
Our FPT algorithm for {\sc Split Contraction} is summarized in the following table:

\begin{table}[H]
\centering
\begin{tabular}{|p{12.5cm}|}
\hline\\
\footnotesize
{\bf Algorithm {\sc Split Contraction} $(G,k)$}
\begin{itemize}
\item Find an induced split subgraph $H=(K_H;I_H)$ of size $(n-2k)$ in $G$. If it does not exist, output ``NO''. Let $V_k=V(G)-V(H)$.
\item If $|K_H|>2k$, then
	\begin{itemize}
	\item Branch into instances $(G',k')$ by contracting edges $E'\subseteq E[V_k]$. Enumerate all partitions $V'_k=(R,K_p,I_p)$.
	\item Case 1: Find $T_1=\{v\in I_H\:|\:\exists x\in I_p, vx\in E(G')\}$. Determine whether $(G'[T_1\cup K_H\cup R\cup K_p],k')$ is a yes-instance of {\sc Clique Contraction}. If yes, output ``YES''.
	\item Case 2: For every $w\in K_H$, find $T_2=\{v\in I_H\:|\:\exists x\in I_p\cup\{w\}, vx\in E(G')\}$. Determine whether $(G'[T_2\cup (K_H-\{w\})\cup R\cup K_p],k')$ is a yes-instance of {\sc Clique Contraction}. If yes, output ``YES''.
	\item If for every $(G',k')$ neither Case 1 nor Case 2 output ``YES'', then output ``NO''.
	\end{itemize}
\item Else
	\begin{itemize}
	\item Partition $V(G)$ into disjoint sets $X_1,\cdots,X_d$ such that each $X_i$ induces a maximal independent set satisfying that vertices in $X_i$ have the same neighbors.
	\item Reduction Rule 1: If $d>2^{4k}+4k$, then output ``NO''.
	\item Reduction Rule 2: If there are more than $2k+5$ vertices in $X_i$ for some $i$, then arbitrarily retain $2k+5$ vertices among them and remove others in $X_i$ from $G$.
	\item Use exhaustive search to find a solution in the reduced graph $G^*$.
	\end{itemize}
\end{itemize}\\
\hline
\end{tabular}
\end{table}

\vskip 0.25cm
\begin{theorem}\label{splitfpt}
{\sc Split Contraction} can be solved in $2^{O(k^2)}n^{O(1)}$ time.
\end{theorem}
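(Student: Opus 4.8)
The plan is to read the statement as the end-product of the algorithm just described, so that the proof reduces to three things: (a) verifying that the case split on $|K_H|$ together with all the branchings is \emph{exhaustive}, i.e.\ the algorithm answers ``YES'' on every yes-instance; (b) verifying \emph{soundness}, i.e.\ every ``YES'' answer exhibits an actual set of at most $k$ edge contractions turning $G$ into a split graph; and (c) bounding the running time by $2^{O(k^2)}n^{O(1)}$. The correctness of each individual ingredient — the four structural propositions governing the case $|K_H|>2k$ and the lemma on Rule~1 and Rule~2 — is already in hand, so the remaining work is the gluing.

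For completeness I would argue as follows. If $(G,k)$ is a yes-instance then, since $k$ contractions touch at most $2k$ vertices, $G$ has an induced split subgraph on $n-2k$ vertices, so the first step succeeds and yields $H=(K_H;I_H)$. In the case $|K_H|>2k$: the proposition reducing $(G,k)$ to the branches supplies a branch $(G',k')$ with a solution $F$ satisfying $F\cap E[V'_k]=\emptyset$; the proposition forcing at most one vertex of $K_H$ into $I_F$ lets us assume we are in Case~1 or Case~2; the triple obtained by letting $R$ be the set of $V'_k$-vertices incident to $F$, $K_p=(V'_k\setminus R)\cap K_F$ and $I_p=(V'_k\setminus R)\cap I_F$ is among the at most $3^{|V'_k|}$ triples enumerated, with $K_p$ inducing a clique and $I_p$ an independent set in $G'$ because their vertices are untouched by $F$; and the localisation proposition for the relevant case then places the (possibly restructured) solution inside $G'[T_1\cup K_H\cup R\cup K_p]$, or inside $G'[T_2\cup(K_H-\{w\})\cup R\cup K_p]$ in Case~2, as at most $k'$ contractions producing a clique, so the invoked {\sc Clique Contraction} procedure answers ``YES''. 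In the case $|K_H|\le 2k$: by the lemma Rule~1 does not reject and Rule~2 preserves the answer, so exhaustive search on the reduced graph finds a solution.

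For soundness I would note that when a {\sc Clique Contraction} call on, say, $G'[T_1\cup K_H\cup R\cup K_p]$ returns at most $k'$ edges whose contraction makes that set a clique, the remaining vertices of $G'$ are exactly $(I_H-T_1)\cup I_p$, which is independent since $I_H$ and $I_p$ are independent and $T_1$ contains every vertex of $I_H$ with a neighbour in $I_p$; hence those $k'$ contractions turn $G'$ into a split graph, and combined with the $k-k'$ contractions of the branch this is a solution of $(G,k)$ of size at most $k$. Case~2 is identical, and the $|K_H|\le 2k$ branch is sound by the lemma. For the running time: finding $H$ costs $2^{2k}n^{O(1)}$; when $|K_H|>2k$ there are at most $2^{\binom{2k}{2}}=2^{O(k^2)}$ branches, each with at most $3^{2k}$ partitions and, in Case~2, at most $|K_H|\le n$ choices of $w$, and each {\sc Clique Contraction} call runs in $k^{O(k)}+O(m)$ time; when $|K_H|\le 2k$ the critical-clique partition and the two reduction rules take $O(n+m)$ time and leave a graph on $(2^{4k}+4k)(2k+5)=2^{O(k)}$ vertices, so exhaustive search over its at most $2^{O(k^2)}$ sets of at most $k$ edges to contract, each resulting graph tested for the split property in time $2^{O(k)}$, costs $2^{O(k^2)}$. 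Summing all contributions and using $k^{O(k)}\le 2^{O(k^2)}$ yields the bound $2^{O(k^2)}n^{O(1)}$.

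The main obstacle, in my view, is the completeness argument in the case $|K_H|>2k$: one must be certain that guessing the correct branch $(G',k')$, the correct partition $(R,K_p,I_p)$, and restructuring the solution around an uncontracted vertex of the large clique $K_H$ together force every yes-instance into one of the constructed {\sc Clique Contraction} instances. Since this is precisely what the four preceding propositions were designed to deliver, the residual task is only to check that their hypotheses are met in each branch and that the enumerations are fine enough to contain the ``right'' guess; everything else is bookkeeping.
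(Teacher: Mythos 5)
Your proposal is correct and follows essentially the same route as the paper: the theorem's proof there is just the running-time accounting, with correctness delegated to the preceding propositions and lemma, and your analysis yields the same $2^{O(k^2)}n^{O(1)}$ bound (your $2^{\binom{2k}{2}}$ count of branches is coarser than the paper's $k^{O(k)}$ but equally sufficient). The one place you go beyond the paper is the explicit soundness check that $(I_H-T_1)\cup I_p$ (and, in Case~2, $(I_H-T_2)\cup I_p\cup\{w\}$) remains independent after the {\sc Clique Contraction} call succeeds — a direction the paper leaves implicit in its propositions — and that check is correct.
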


\begin{proof}
In the above algorithm, we use an $2^{2k}n^{O(1)}$ time algorithm to find an $(n-2k)$-vertex induced split subgraph $H$. If $|K_H|>2k$, we branch into at most $|E[V_k]|^k=k^{O(k)}$ instances and enumerate at most $3^{|V'_k|}=3^{O(k)}$ partitions, and for each resulting instance it costs $k^{O(k)}+O(m)$ time to determine whether it is a yes-instance. Thus the running time of this step is bounded by $k^{O(k)}n^{O(1)}$. If $|K_H|\leq2k$, the running time of the algorithm is $2^{O(k^2)}+O(m)$. Therefore the total running time is $2^{O(k^2)}n^{O(1)}$.
\end{proof}


\section{Contracting split graphs to threshold graphs}

After obtaining an FPT algorithm for contracting general graphs to split graphs, we consider the problem of contracting split graphs to threshold graphs, which is formally defined as {\sc Threshold Contraction} on split graphs: Given a split graph $G$ and an integer $k$, can we obtain a threshold graph from $G$ by contracting at most $k$ edges?

Threshold graphs constitute a subclass of both split graphs and interval graphs in graph theory. A graph $G$ is a threshold graph if it is a split graph and for any split partition $(K;I)$ of $G$, there is an ordering $(u_1,u_2,\cdots,u_p)$ of the vertices in $K$ such that $N[u_1]\subseteq N[u_2]\subseteq\cdots\subseteq N[u_p]$ and there is an ordering $(v_1,v_2,\cdots,v_q)$ of the vertices in $I$ such that $N(v_1)\supseteq N(v_2)\supseteq\cdots\supseteq N(v_q)$, i.e., a split graph containing no induced $P_4$.

Note that the class of split graphs is closed under edge contractions, and contracting edges in a split graph never generates a new induced $P_4$ because split graphs contain no induced $P_l$ for $l>4$. Thus contracting a split graph to a threshold graph is essentially removing all induced $P_4$ from the split graph by edge contractions. Based on this observation, we obtain the following FPT algorithm.

\vskip 0.25cm
Given an input split graph $G$ with $n$ vertices and an integer $k$, we find a split partition $(K;I)$ for $G$ where $K$ is a maximal clique and $I$ is an independent set. We construct a bounded search tree whose root is labelled by $(G,k,T)$, where $T$ represents a set of vertices in $K$ that are incident with edges in a solution set and is initially an empty set $\emptyset$.

\begin{observation}
For any $v\in I$ and any $x,y\in K$ such that $vx,vy\in E(G)$, these two graphs, $G/vx$ and $G/vy$, are isomorphic. 
\end{observation}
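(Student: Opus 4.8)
The plan is to show that each of $G/vx$ and $G/vy$ is isomorphic to a single fixed graph, namely $G-v$, the graph obtained from $G$ by deleting the vertex $v$; transitivity of isomorphism then yields $G/vx\cong G/vy$. The structural fact driving this is that, since $v\in I$ and $I$ is independent, all neighbours of $v$ lie in the clique: $N_G(v)\subseteq K$. Hence for the given clique vertex $x$ with $vx\in E(G)$ we have $N_G(v)\setminus\{x\}\subseteq K\setminus\{x\}\subseteq N_G(x)$, i.e. $x$ already dominates the neighbourhood of $v$.

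First I would define $\varphi\colon V(G/vx)\to V(G-v)$ to send the vertex $w$ created by contracting $vx$ to $x$, and to fix every other vertex. To verify that $\varphi$ is an isomorphism it suffices to check adjacencies involving $w$, since contraction does not disturb edges among the remaining vertices. For $u\notin\{v,x\}$, by the definition of contraction $u$ is adjacent to $w$ in $G/vx$ iff $u\in N_G(v)$ or $u\in N_G(x)$; but $u\in N_G(v)$ forces $u\in K$, and then $u\in N_G(x)$ because $K$ is a clique, so this condition is equivalent to $u\in N_G(x)$, which (as $u\neq v$) is exactly adjacency of $u$ and $x$ in $G-v$. Therefore $G/vx\cong G-v$.

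Running the identical argument with $y$ in place of $x$ gives $G/vy\cong G-v$, and hence $G/vx\cong G/vy$. There is essentially no obstacle: the only delicate point is confirming that contracting $vx$ introduces no new adjacency, which is precisely what $N_G(v)\subseteq K$ together with the clique property of $K$ guarantees. Note also that the argument uses nothing about $x,y$ beyond their membership in $K$, so it applies uniformly to every choice of such a pair.
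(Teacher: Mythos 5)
Your proof is correct. The paper states this as an unproved observation, but your argument is exactly the natural one the authors must have had in mind: since $v\in I$ implies $N_G(v)\subseteq K\subseteq N_G[x]$, contracting $vx$ produces a graph isomorphic to $G-v$ (with the merged vertex playing the role of $x$), and the same holds for $vy$, so both contractions yield $G-v$ up to isomorphism.
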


For any induced $4$-path $v_1v_2v_3v_4$ in $G$ where $v_1,v_4\in I$ and $v_2,v_3\in K$, there are five ways to destroy this path by edge contractions: contracting $v_1x\in E(G)$ for any $x\in K$, contracting $v_4y\in E(G)$ for any $y\in K$, contracting $v_2v_3$, making $v_2$ adjacent to $v_4$, or making $v_3$ adjacent to $v_1$. The first two cases are based on Observation 4.1. The last two cases imply that $v_2$ or $v_3$ is involved in at least one edge contraction. So we branch into $5$ instances: $(G/v_1v_2,k-1,T)$, $(G/v_3v_4,k-1,T)$, $(G/v_2v_3,k-1,T)$, $(G,k,T\cup\{v_2\})$, and $(G,k,T\cup\{v_3\})$. For the fourth case, we mark every induced $P_4$ containing both $v_2$ and $v_4$ in the current graph. For the fifth case, we mark every induced $P_4$ containing both $v_3$ and $v_1$ in the current graph. We continue branching whenever $k>0$, $|T|<2k$, and the graph contains an induced $P_4$ that is not marked. Clearly the search tree has height at most $2k$, and its size is bounded by $O(5^{2k})$.

\begin{proposition}
$(G,k)$ is a yes-instance iff there exists at least one leaf $(G',k',T')$ in the search tree such that $(G',k')$ is a yes-instance, $|T'|\leq2k'$, and every induced $P_4$ in $G'$ is marked.
\end{proposition}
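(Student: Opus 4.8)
\noindent The plan is to prove the two directions separately; in each case I would follow a root-to-leaf path of the search tree guided by a hypothetical solution, using throughout the observation (made just before the statement) that contracting an edge inside a split graph never creates a new induced $P_4$, so that a set of edges is a solution precisely when its contraction kills every induced $P_4$. For the forward direction, suppose $(G,k)$ is a yes-instance and fix a solution $S$ with $|S|\le k$ and $G/S$ a threshold graph. I would descend from the root, maintaining at the current node $(\hat G,\hat k,\hat T)$ the invariant that there is a solution $\hat S$ of $\hat G$ with $|\hat S|\le\hat k$ such that every vertex of $\hat T$ is incident with an edge of $\hat S$ and every marked induced $P_4$ of $\hat G$ is destroyed by $\hat S$ via the very contraction recorded when that $P_4$ got marked (at the root this holds with $\hat S=S$). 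At a node carrying an unmarked induced $P_4$ $v_1v_2v_3v_4$ with $v_1,v_4\in I$ and $v_2,v_3\in K$, the graph $\hat G/\hat S$ is $P_4$-free, so $\hat S$ destroys this path; by the analysis of the five ways to destroy an induced $P_4$ given above, exactly one occurs: $\hat S$ contracts some $v_1x$ with $x\in K$, or some $v_4y$ with $y\in K$, or $v_2v_3$, or an edge $v_2u$ with $u$ adjacent to $v_4$, or an edge $v_3u'$ with $u'$ adjacent to $v_1$. In the first three cases I pass to the corresponding edge-contracting child and drop the contracted edge from $\hat S$; in the first two I apply Observation 4.1 ($\hat G/v_1x\cong\hat G/v_1v_2$, resp.\ $\hat G/v_4y\cong\hat G/v_3v_4$) so that the child's graph is exactly the one the algorithm uses and $\hat S$ carries over. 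In the last two cases I pass to $(\hat G,\hat k,\hat T\cup\{v_2\})$, resp.\ $(\hat G,\hat k,\hat T\cup\{v_3\})$, keeping $\hat S$, and verify that each newly marked $P_4$ (one through $v_2$ and $v_4$, resp.\ through $v_3$ and $v_1$) is destroyed by the recorded contraction, which follows from the case in hand; the invariant thus persists. Each step strictly decreases $\hat k$ or strictly enlarges $\hat T$, so the path is finite and ends at a leaf $(G',k',T')$ where $\hat S$ witnesses that $(G',k')$ is a yes-instance and $|T'|\le 2|\hat S|\le 2k'$.

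\noindent What remains in the forward direction, and where I expect the real difficulty, is to show that the leaf reached this way also satisfies ``every induced $P_4$ of $G'$ is marked''. A leaf occurs for one of three reasons: there is no unmarked $P_4$ (done at once); $\hat k=0$, in which case $\hat S=\emptyset$ and $G'$ is already $P_4$-free; or $|\hat T|\ge 2\hat k$. The third case is the delicate one: from the invariant, $|\hat T|\le 2|\hat S|\le 2\hat k$ forces $\hat S$ to be a perfect matching on $\hat T\subseteq K$, so no edge of $\hat S$ meets $I$. One then has to argue that under the invariant no unmarked induced $P_4$ can survive such a configuration --- intuitively, any such $P_4$ must be destroyed through a clique vertex already in $\hat T$, and the marking performed when that vertex entered $\hat T$, together with the invariant on recorded contractions, should already have covered it. Turning ``the bookkeeping by $\hat T$ and the marks keeps pace with every way $\hat S$ can destroy a $P_4$'' into a proof is the crux; it may call for strengthening the invariant (for instance recording, for each vertex of $\hat T$, the independent vertices it has been paired against) or a more careful choice of which $P_4$'s are marked.

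\noindent For the backward direction, take a leaf $(G',k',T')$ with the three stated properties and walk up to the root, proving by induction that every node on this path is a yes-instance. A ``$T$-child'' has exactly the same graph and the same budget as its parent, so soundness is immediate there. A child obtained by contracting an edge $e\in\{v_1v_2,v_3v_4,v_2v_3\}$ has graph $\hat G/e$ and budget $\hat k-1$; given a solution $F'$ of it with $|F'|\le\hat k-1$, the set $F'\cup\{e\}$ has size at most $\hat k$ and $\hat G/(F'\cup\{e\})=(\hat G/e)/F'$ is a threshold graph, so $\hat G$ is a yes-instance with budget $\hat k$. Propagating up to the root gives that $(G,k)$ is a yes-instance, completing the equivalence. (This direction uses only that $(G',k')$ is a yes-instance; the other two leaf conditions serve to keep the algorithm's exhaustive check confined to such leaves.)
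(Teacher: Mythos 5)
The paper states this proposition without proof, so there is nothing to compare against; I evaluate your attempt on its own merits. Your backward direction is sound: lifting a solution of a leaf up through the tree by re-adjoining the contracted edges (and observing that $T$-children share graph and budget with their parent) is exactly right, and you correctly note that only the ``$(G',k')$ is a yes-instance'' clause is needed there. The forward direction, however, is genuinely incomplete, and the gap you flag is the real one. When the guided descent halts because $|\hat T|\ge 2\hat k$, the invariant you carry (every vertex of $\hat T$ meets $\hat S$; every marked $P_4$ is killed via the recorded contraction) does \emph{not} imply that every induced $P_4$ of $G'$ is marked. The danger is concrete: the branching rule marks only those $P_4$'s passing through the newly added clique vertex $v_2$ \emph{and the specific independent endpoint $v_4$} of the currently processed path, whereas once $|\hat T|=2\hat k$ forces $\hat S$ to be a perfect matching on $\hat T\subseteq K$, the pair $\{v_2,v_2'\}\in\hat S$ in fact destroys every $P_4$ through $v_2$ whose far independent endpoint lies anywhere in $N(v_2')\cap I$ --- a potentially much larger family than what the algorithm has marked by the time $|T|$ saturates. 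So an unmarked $P_4$ can survive at such a leaf, and nothing in your argument (or in the paper's stated branching/marking rule) excludes this.

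Closing this gap needs more than the sketch given: you would have to strengthen the invariant to record, for each vertex of $T$, \emph{all} the independent vertices it must become adjacent to, and argue that the corresponding $P_4$'s get marked before $|T|$ reaches $2k$ --- or else accept that the algorithm may have to revisit a vertex already in $T$ (at which point the claimed height bound $2k$ no longer follows from ``each branch decreases $k$ or grows $|T|$''). Two smaller issues: do not claim that exactly one of the five destruction modes occurs (several may hold simultaneously; you should just fix a rule for choosing one), and you should say explicitly how marks transfer into a contraction-child $(\hat G/e,\hat k-1,\hat T)$, since a marked $P_4$ lives in $\hat G$ while the child's graph is $\hat G/e$ and the correspondence between their induced $P_4$'s is not the identity.
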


\vskip 0.25cm
Next, we determine whether a leaf $(G',k',T')$ is a yes-instance in FPT time. Let $(K';I')$ be the split partition of $G'$ corresponding to $(K;I)$. Since every induced $P_4$ in $G'$ is marked and thus intersects $T'$, then $G'-T'$ contains no induced $P_4$. Therefore $(K'-T';I')$ constitutes a threshold graph and there exists a vertex $u\in K'-T'$ such that $N[u]\supseteq N[w]$ for every $w\in K'-T'$. For each $w\in K'-T'$, let $P(w)$ be a set of exactly those vertices $v\in T'$ such that there exists an induced $P_4$ containing the edge $wv$ in $G'$. We group the vertices $\{w\in K'-T':|P(w)|>0\}$ into different sets $R_1,\cdots,R_d$ by $P(w)$ (i.e., any $w_1$ and $w_2$ are grouped into a same set iff $P(w_1)=P(w_2)\neq\emptyset$), and guarantee that $|R_i|\leq2k'+1$ for each $i$. If there are more than $2k'+1$ vertices $w$ that have the same $P(w)$ values, then choose $2k+1$ of them with largest degrees to form $R_i$. Obviously $d<2^{|T'|}\leq 2^{2k'}$.

Suppose that $(G',k')$ is a yes-instance and has a minimal solution set $S$. If there is a vertex $w\in I'$ that is incident with some edge $wv$ in $S$, then removing $wv$ from $S$ also yields a solution set because every induced $P_4$ in $G'$ is marked, contradicting to the fact that $S$ is minimal. Thus we have $S\subseteq E[K']$. We now use $S$ to obtain a solution set that is entirely contained in a set whose size is bounded by a function of $k'$.

We first replace every $xw\in S$ that $x\in K',w\in K'-T'$, and $w$ is not contained in any induced $P_4$ (i.e., $|P(w)|=\emptyset$) by an edge $xu$, and then obtain a set $S_1$. Note that contracting $xw$ only affects such induced $4$-path $a_1xa_2a_3$ that $a_2\in K'$ and $a_1,a_3\in I'$. To destroy this path, $w$ must be adjacent to $a_3$ in $G'$. Thus $u$ is adjacent to $a_3$ because $N[w]\subseteq N[u]$, implying that contracting $xu$ instead of $xw$ also destroys the path $a_1xa_2a_3$ (see Fig. 2(a)).

\begin{lemma}
$S_1$ is a solution set of $(G',k')$ if $S$ is.
\end{lemma}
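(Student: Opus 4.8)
\n Since $G'$ is a split graph and split graphs are closed under edge contraction, $G'/S_1$ is again a split graph, and $|S_1|\le|S|\le k'$ because each edge of $S$ is replaced by at most one edge, the new edge $xu$ again lying in $E[K']$. So the plan is to show that $G'/S_1$ contains no induced $P_4$; I would do this by assuming otherwise and tracing the offending $P_4$ back to one in $G'/S$, which would contradict that $S$ is a solution.

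\n First I would record the structural facts the argument rests on. (a) Since $S\subseteq E[K']$, every witness set of $G'/S$ is either a singleton $\{v\}$ with $v\in I'$ or a subset of the clique $K'$; the same holds for $S_1$, so $(K'/S_1;I')$ is a split partition of $G'/S_1$ and all witness sets coming from $K'$ are pairwise adjacent in $G'/S_1$. (b) From the fact (established above) that $N[u]\supseteq N[w]$ for every $w\in K'-T'$, restricting to $I'$ gives $N_{G'}(w)\cap I'\subseteq N_{G'}(u)\cap I'$; hence any $v\in I'$ adjacent in $G'$ to such a $w$ is also adjacent to $u$. (c) The only new edges of $S_1$ are incident to $u$, so every witness set of $G'/S_1$ other than the one containing $u$ is contained in a witness set of $G'/S$, while the witness set of $u$ is a union of witness sets of $G'/S$ together with some re-attached clique vertices; moreover each ``freed'' vertex $w$ (one in $K'-T'$ with $|P(w)|=\emptyset$ whose $S$-edges were replaced) lies in no induced $P_4$ of $G'$.

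\n Then, assuming $G'/S_1$ has an induced $P_4$, by (a) and the fact that an induced $P_4$ in a split graph has both endpoints in the independent part, this $P_4$ has endpoints $v_1,v_2\in I'$ and middle vertices witness sets $C,C'\subseteq K'$; so $v_1$ has a $G'$-neighbour $c_1\in C$ but none in $C'$, and symmetrically $v_2$ has a $G'$-neighbour $c_2\in C'$ but none in $C$. Splitting on whether $u$ lies in $C$, in $C'$, or in neither, and on whether $c_1$ and $c_2$ lie in $T'$ or in $K'-T'$, I would exhibit witness sets $B,B'$ of $G'/S$ that $v_1$ and $v_2$ separate --- i.e.\ $v_1$ touches $B$ but not $B'$ and $v_2$ touches $B'$ but not $B$ --- yielding an induced $P_4$ in $G'/S$, the contradiction. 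Whenever a relevant neighbour of $v_1$ or $v_2$ lies in $K'-T'$ I would promote it to $u$ using (b); the delicate case is when a freed vertex $w$ appears to carry the adjacency separating $C$ from $C'$, and there (c), namely that $w$ lies in no induced $P_4$ and $N_{G'}[w]\subseteq N_{G'}[u]$, rules this out --- this is exactly the verification in the text that the path $a_1xa_2a_3$ is still destroyed (Fig.~2(a)).

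\n The hard part will be the bookkeeping for how replacing $xw$ by $xu$ reshapes the partition of $K'$: a single replacement simultaneously ``frees'' $w$ and pulls $x$ into $u$'s witness set, so the witness sets of $G'/S_1$ are neither uniformly sub- nor supersets of those of $G'/S$. The crux is to show that this rearrangement never ``uncrosses'' two independent-set vertices --- never makes the traces of $v_1,v_2$ on $K'/S_1$ incomparable when their traces on $K'/S$ were comparable --- which, as sketched, reduces via $N_{G'}[w]\subseteq N_{G'}[u]$ and $|P(w)|=\emptyset$ to the $P_4$-freeness of $G'/S$ that we already have. The remaining ingredients --- split graphs being closed under contraction, the equivalence ``threshold $\Leftrightarrow$ $P_4$-free split'', and the shape of induced $P_4$'s in split graphs --- are routine.
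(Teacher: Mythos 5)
Your proposal recasts the lemma as a contrapositive: assume $G'/S_1$ has an induced $P_4$ and trace it back to one in $G'/S$. That is a genuinely different route from the paper's, which never compares the two contracted graphs at all. Instead, the paper leans on the observation made at the start of the section that contracting edges of a split graph never creates a new induced $P_4$; hence $G'/F$ is threshold iff every induced $P_4$ of $G'$ is destroyed by $F$. With that reduction, the argument becomes a short local check on $P_4$'s of $G'$: since $w\in K'-T'$ with $P(w)=\emptyset$ lies in no induced $P_4$, the contraction $xw$ can only be responsible for destroying a $P_4$ through $x$, necessarily of the form $a_1xa_2a_3$ with $a_1,a_3\in I'$, $a_2\in K'$; for $xw$ to destroy it, $w$ must be adjacent to $a_3$; and then $N[w]\subseteq N[u]$ gives $u$ adjacent to $a_3$, so $xu$ destroys the same path.

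Your framing makes the bookkeeping substantially harder, and you do not complete it. The step ``Splitting on whether $u$ lies in $C$, in $C'$, or in neither, and on whether $c_1$ and $c_2$ lie in $T'$ or in $K'-T'$, I would exhibit witness sets $B,B'$ of $G'/S$ that $v_1$ and $v_2$ separate'' is the entire content of the lemma and is left as a promise; you yourself flag it as ``the hard part.'' Your preliminary fact (c) is also not quite right: the witness set of $u$ in $G'/S_1$ is not a union of $G'/S$-witness sets, because each freed $w$ is \emph{removed} from its old partner's witness set, so the witness sets of $G'/S_1$ are neither uniformly finer nor uniformly coarser than those of $G'/S$ --- which is precisely what makes the back-tracing delicate. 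In short, you identify the right ingredients ($N[w]\subseteq N[u]$ and $P(w)=\emptyset$) but the central case analysis is missing, and arguing directly over $P_4$'s of the uncontracted $G'$, as the paper does, sidesteps the witness-set comparison entirely.
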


\begin{figure}[H]
\centering
\begin{tikzpicture}
  \tikzset{every node/.style={circle,draw,inner sep=0pt,minimum size=5pt}}
  \node[fill=black] at (1,3)(a1){};
  \node[fill=black] at (1,2)(x){};
  \node[fill=black] at (3,2)(a2){};
  \node[fill=black] at (3,3)(a3){};
  \node[fill=black] at (1.6,1.5)(u){};
  \node[fill=black] at (2.4,1.5)(w){};
  \draw[rounded corners,dashed](0.5,2.35) rectangle (3.5,1);
  \draw[red](a1)--(x)--(a2)--(a3);
  \draw(x)--(w)--(a3)--(u)--(x);
  \bubble(1.6,1.25){$u$};
  \bubble(2.4,1.25){$w$};
  \bubble(1,1.75){$x$};
  \bubble(1,3.25){$a_1$};
  \bubble(3,1.75){$a_2$};
  \bubble(3,3.25){$a_3$};
  \bubble(4,1.5){$K'$};
  \bubble(2,0.5){(a)};

  \node[fill=black] at(7.75,3)(b1){};
  \node[fill=black] at(7,3)(b4){};
  \node[fill=black] at(9,3)(b3){};
  \node[fill=black] at(9.5,3)(b5){};
  \node[fill=black] at(9,1.5)(b2){};
  \node[fill=black] at(7.75,2)(w1){};
  \node[fill=black] at(7,1.5)(wx){};
  \draw[rounded corners,dashed](6.5,2.35) rectangle (9.5,1);
  \draw[red](b1)--(w1)--(b2)--(b3);
  \draw[blue](b4)--(wx)--(b2)--(b5);
  \draw(wx)--(b1);
  \draw[green,dashed](b1)--(b2);
  \bubble(7,1.25){$w'$};
  \bubble(7.75,1.75){$w$};
  \bubble(9,1.25){$b_2$};
  \bubble(7,3.25){$b_4$};
  \bubble(7.75,3.25){$b_1$};
  \bubble(9,3.25){$b_3$};
  \bubble(9.5,3.25){$b_5$};
  \bubble(10,1.5){$K'$};
  \bubble(8,0.5){(b)};

\end{tikzpicture}
\caption{(a) $a_1xa_2a_3$ can be destroyed by contracting $xu$ instead of $xw$; (b) $b_1wb_2b_3$ can be destroyed by contracting $S_1-\{wy\}$}
\end{figure}
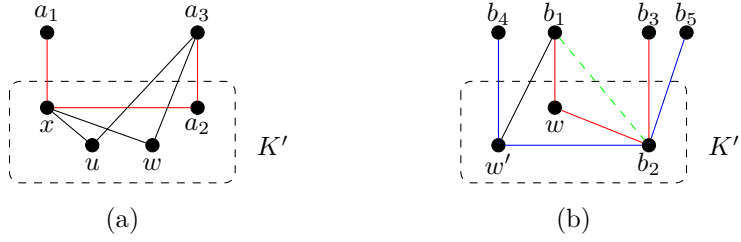

Furthermore for any $w\in K'-T'$ such that $P(w)\neq\emptyset$ and $w$ is not contained in any $R_i$, there exits a vertex $w'\in R_j$ for some $j$ such that $P(w')=P(w)$ and $w'$ is not involved in any edge contraction since $|R_j|=2k'+1>2k'$. We replace the edges $\{wy\in S_1: y\in K'\}$ incident with $w$ by $\{w'y: wy\in S_1\}$ for every such $w$ and then obtain a set $S_2$. We show that $S_2$ is also a solution set.

\begin{lemma}
$S_2$ is a solution set of $(G',k')$ if $S_1$ is.
\end{lemma}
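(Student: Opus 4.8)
The plan is to show that $G'/S_2$ is a threshold graph, following the pattern of the previous lemma but using the extra hypothesis $P(w)=P(w')$. Let $\phi$ be the map on $V(G')$ that sends each replaced vertex $w$ to its substitute $w'$ and fixes every other vertex; then $S_2=\{\phi(x)\phi(y):xy\in S_1\}$, so $S_2\subseteq E[K']$ and $|S_2|\le|S_1|\le k'$. Since split graphs are closed under edge contraction, $G'/S_2$ is a split graph with split partition $(\widehat K;I')$ for some clique $\widehat K$, so it suffices to prove that $G'/S_2$ contains no induced $P_4$.

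Two facts will do the work. First, $N_{G'}[w]\subseteq N_{G'}[w']$ for every replaced $w$ and its substitute $w'$: both lie in $K'-T'$, and since $(K'-T';I')$ is a threshold graph the sets $N_{G'}(v)\cap I'$ with $v\in K'-T'$ are linearly ordered by inclusion; as $w'$ was chosen among the $2k'+1$ largest‑degree vertices of its $P$-class while $w$ was not, $\deg_{G'}(w)\le\deg_{G'}(w')$, hence $|N_{G'}(w)\cap I'|\le|N_{G'}(w')\cap I'|$, and with the linear order this forces $N_{G'}(w)\cap I'\subseteq N_{G'}(w')\cap I'$; adjoining $K'$ to both sides gives the claim. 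Second, the witness structure of $(V(G'),S_2)$ is obtained from that of $(V(G'),S_1)$ by pulling each replaced $w$ out into its own singleton witness set and rerouting its $S_1$-neighbours (all of which lie in $K'$ inside the witness set $W_{S_1}(w)$) through $w'$; connectivity survives because every connected piece of $W_{S_1}(w)\setminus\{w\}$ contains an $S_1$-neighbour of $w$, and $w'$ is joined to all of them. So in passing from $G'/S_1$ to $G'/S_2$ the only changes are that each $W_{S_1}(w)$ becomes $(W_{S_1}(w)\setminus\{w\})\cup\{w'\}$ and each replaced $w$ reappears as a clique vertex isolated in $S_2$.

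Now suppose for contradiction that $G'/S_2$ has an induced $P_4$; in a split graph it has the form $p_1-Q_2-Q_3-p_4$ with $p_1,p_4\in I'$ and $Q_2,Q_3$ clique witness sets, and picking $z\in Q_2$ with $z\sim_{G'}p_1$ and $z'\in Q_3$ with $z'\sim_{G'}p_4$ makes $p_1-z-z'-p_4$ an induced $P_4$ of $G'$ (its non‑edges $z p_4$, $z' p_1$, $p_1 p_4$ hold because $Q_2\cap N_{G'}(p_4)=Q_3\cap N_{G'}(p_1)=\emptyset$ and $p_1,p_4\in I'$). If this $P_4$ involves no replaced vertex and no substitute, it is already present in $G'/S_1$, a contradiction. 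Otherwise some $Q_i$ is a relocated witness set or a new singleton, or one of $z,z'$ is a substitute; say $z=w'$. Then $z'\in T'$ (else $p_1,w',z',p_4$ induce a $P_4$ inside the $P_4$-free graph $(K'-T';I')$), so $z'\in P(w')=P(w)$, which yields an induced $P_4$ of $G'$ through the edge $wz'$, say $b_1-w-z'-b_4$ with $b_1,b_4\in I'$. Using $N_{G'}[w]\subseteq N_{G'}[w']$ (so $b_1\sim_{G'}w'$ and $p_4\not\sim_{G'}w$) one produces from these a new induced $P_4$ of $G'$ and then examines the finitely many ways $S_1$ — whose contraction is $P_4$-free — can destroy it; its middle edge is not in $S_1$ (since $w'$ is incident to no $S_1$-edge and $w z'$ would put $z'$ into $Q_2$), so it is killed through $W_{S_1}(\cdot)$ meeting a neighbourhood, and transporting that membership through $\phi$ and the containment back into $G'/S_2$ contradicts $Q_2\cap N_{G'}(p_4)=\emptyset$ or $Q_3\cap N_{G'}(p_1)=\emptyset$. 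The remaining subcases (a substitute in the role of $z'$, a new singleton as some $Q_i$, etc.) are handled the same way.

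I expect the last family of cases to be the main obstacle. Because $N_{G'}[w]\subseteq N_{G'}[w']$ is one-directional, the substitution can create adjacencies in $G'/S_2$ that were absent in $G'/S_1$ — $w'$ may have $I'$-neighbours that $w$ lacks — and the real content is showing that no such new adjacency can complete an induced $P_4$; this is exactly where $P(w)=P(w')$, rather than merely the neighbourhood containment, is indispensable, and it is the step that must be argued carefully rather than by appeal to the earlier lemma. The accompanying bookkeeping — tracking which $S_1$-witness set each $Q_i$ descends from, and the case where several replaced vertices share one substitute so their $S_1$-components merge into a single $S_2$-witness set — multiplies the subcases but raises no conceptual difficulty.
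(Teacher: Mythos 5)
You have correctly identified the two facts that drive this lemma: the neighbourhood containment $N_{G'}[w]\subseteq N_{G'}[w']$, which you derive carefully from the linear order of $I'$-neighbourhoods in the threshold graph $(K'-T';I')$ together with the largest-degree choice of $R_j$ (the paper only says ``by the construction rule of $R_j$''), and the role of $P(w)=P(w')$ in producing a parallel induced $P_4$ through $w'$. But you take a contrapositive route --- lift a hypothetical $P_4$ of $G'/S_2$ back to $G'$ and seek a contradiction --- where the paper argues directly that every induced $P_4$ of $G'$ is destroyed after contracting $S_2$ (it starts from an arbitrary $P_4$ $b_1wb_2b_3$, uses $b_2\in P(w)=P(w')$ to obtain the parallel $P_4$ $b_1w'b_2b_5$, observes $w'$ is never contracted so $S_1$ must make $b_1$ adjacent to the witness set of $b_2$, and concludes the original path dies too).

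The gap is that the crucial step of your argument is never carried out. The subcase you sketch, $z=w'$ a substitute, in fact needs only the containment: $S_1$ is forced to make $p_1\sim W_{S_1}(z')$, and pushing the witnessing neighbour through $\phi$ lands in $W_{S_2}(z')=Q_3$, contradicting $Q_3\cap N_{G'}(p_1)=\emptyset$. The genuinely hard case is the one you relegate to ``the remaining subcases \dots are handled the same way'': when $Q_2$ is a new singleton $\{w\}$ for a replaced vertex $w$. There $S_1$ may destroy the lifted $P_4$ $p_1\,w\,z'\,p_4$ by making some $c\in W_{S_1}(w)$ with $c\neq w$ adjacent to $p_4$, and that adjacency migrates under $\phi$ to the $S_2$-witness set of $w'$, not to $Q_2=\{w\}$, so no contradiction is immediate. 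It is precisely here that $P(w)=P(w')$ is needed in the way the paper uses it, and your sentence ``one produces from these a new induced $P_4$ \dots and then examines the finitely many ways $S_1$ \dots can destroy it; \dots and transporting that membership through $\phi$ \dots'' does not pin down what the new $P_4$ is or why the transport lands where you need it. You yourself flag this as ``the step that must be argued carefully,'' and since the two families of subcases are not in fact handled the same way, the proof as written is incomplete. (The witness-set bookkeeping in your ``second fact'' --- splitting of $W_{S_1}(w)$, merging when several replaced vertices share a substitute --- is also asserted rather than verified, though that part is routine.)
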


\begin{proof}
Note that contraction of each $wy\in S_1$ only destroys the induced $4$-paths containing either $w$ or $y$, then other induced $4$-paths in $G'$ are destroyed by contracting $S_1-\{wy\}$. Since $w'\in R_j,w\notin R_j$, and $P(w')=P(w)$, we have $N[w]\subseteq N[w']$ by the construction rule of $R_j$. Similar to the above proof for $S_1$, contracting $w'y$ instead of $wy$ will destroy the paths that contain $y$ but not $w$.

Moreover for any induced path $b_1wb_2b_3$ containing $w$, there exists another induced path $b_4w'b_2b_5$ containing $w'$ and $b_2$ in $G'$ because $b_2\in P(w)=P(w')$. Clearly $\{b_1,w',b_2,b_5\}$ constitute an induced $P_4$ in $G'$ since $b_1$ and $b_2$ are not adjacent and $w'$ and $b_5$ are not adjacent either. Note that $b_1w'b_2b_5$ is destroyed by contracting $S_1-\{wy\}$ and $w'$ is not involved in any edge contraction. Therefore $b_2$ must be adjacent to $b_1$ after contractions, implying that $b_1wb_2b_3$ can be destroyed by contracting $S_1-\{wy\}$ (see Fig. 2(b)).

Thus contracting $(S_1-\{wy\})\cup\{w'y\}$ also removes all induced $4$-paths in $G'$, implying that $S_2$ is a solution.
\end{proof}

Let $R=R_1\cup\cdots\cup R_d$, we have the following result.

\begin{corollary}
$(G',k')$ has a solution set that is entirely contained in $E[T'\cup R\cup\{u\}]$ if it is a yes-instance.
\end{corollary}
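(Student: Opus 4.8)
The strategy is to start from a minimal solution of $(G',k')$ and repeatedly apply the two edge-redirection operations used in the preceding two lemmas, moving every contracted edge into the bounded set $T'\cup R\cup\{u\}$ one step at a time. First I would fix a minimal solution set $S$; as already noted in the text, minimality forces $S\subseteq E[K']$, since any edge of $S$ meeting $I'$ is redundant because every induced $P_4$ of $G'$ is marked and hence destroyed by the remaining contractions. I would also record the structural fact that every induced $P_4$ of a split graph has its two endpoints in $I'$ and its two internal vertices in $K'$, so a vertex $w\in K'-T'$ lying on some induced $P_4$ must have $P(w)\neq\emptyset$; consequently a vertex of $K'-T'$ with $P(w)=\emptyset$ lies on no induced $P_4$ at all, and an edge of $S$ both of whose endpoints are such vertices destroys no induced $P_4$ and may be deleted by minimality. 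Hence I may assume every edge of $S$ has an endpoint lying in $T'\cup\{u\}$ or being a vertex $w\in K'-T'$ with $P(w)\neq\emptyset$.

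Next I would apply the first operation: for every edge $xw\in S$ with $w\in K'-T'$ and $P(w)=\emptyset$, redirect the endpoint $w$ to $u$. By the lemma stating that $S_1$ is a solution, this produces a solution $S_1$ all of whose edges now have both endpoints in $T'\cup\{u\}\cup\{w\in K'-T':P(w)\neq\emptyset\}$. I would then iterate the second operation, eliminating one ``bad'' vertex at a time: while some vertex $w$ with $w\in K'-T'$, $P(w)\neq\emptyset$, $w\notin R$, and $w\neq u$ is incident with an edge of the current solution, pick the group $R_j$ whose members $w'$ satisfy $P(w')=P(w)$, choose a vertex $w'\in R_j$ incident with no edge of the current solution, and redirect every edge at $w$ to $w'$. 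Such a $w'$ exists because the current solution has at most $k'$ edges and hence meets at most $2k'<2k'+1=|R_j|$ vertices, and the construction of $R_j$ guarantees $N[w]\subseteq N[w']$; the proof that $S_2$ is a solution uses only these facts together with $P(w)=P(w')$, so the redirected set remains a solution. Each step removes one bad vertex, introduces none, and does not enlarge the set of touched vertices, so after finitely many steps every edge lies inside $T'\cup R\cup\{u\}$, which is exactly the claimed conclusion.

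The step I expect to be the main obstacle is the bookkeeping in this iteration, and it is the one I would check most carefully: I must verify that redirecting $w$ to $w'$ preserves the hypotheses needed for the next redirection, in particular that several bad vertices sharing the same $P$-value do not exhaust the representatives available in a single group $R_j$. This is exactly why the inequality ``at most $2k'$ touched vertices versus $|R_j|=2k'+1$'' must be re-verified after every step; it does remain valid because a redirection merely swaps $w$ out for $w'$ among the touched vertices. The degenerate case of an edge of $S$ whose two endpoints are both useless vertices of $K'-T'$, which would collapse to a loop under the first operation, is what forces the preliminary clean-up of $S$ by minimality described above.
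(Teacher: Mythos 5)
Your proof is correct and follows the same two-stage redirection chain ($S\to S_1\to S_2$) that the paper's Lemmas establish; the corollary is stated without its own proof precisely because the authors intend exactly this composition. Your two additional touches — the preliminary clean-up by minimality of edges both of whose endpoints lie off every induced $P_4$ (so that the first redirection never produces a degenerate self-loop at $u$), and the explicit re-verification after each redirection that the current solution still touches at most $2k'$ vertices so that a fresh representative in $R_j$ remains available — are careful refinements of arguments the paper leaves implicit, not a different route.
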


Thus to determine whether $(G',k')$ is a yes-instance, we use exhaustive search to check whether there is a $k'$-size solution set in $E[T'\cup R\cup\{u\}]$, which costs $O(\binom{|T'|+|R|+1}{2}^{k'})=2^{O(k'^2)}$ time since $|R|\leq (2k'+1)2^{2k'}$. We have obtained an FPT algorithm for {\sc Threshold Contraction} on split graphs.

\begin{theorem}\label{thresholdfpt}
{\sc Threshold Contraction} can be solved in $2^{O(k^2)}n^{O(1)}$ time for split graphs.
\end{theorem}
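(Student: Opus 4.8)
The plan is to assemble the machinery developed above into one algorithm and then bound its running time. Since the input $G$ is a split graph, a split partition $(K;I)$ with $K$ a maximal clique is computable in linear time, so preprocessing is free. First I would build the bounded search tree rooted at $(G,k,\emptyset)$ exactly as described: while the current node $(G,k,T)$ satisfies $k>0$, $|T|<2k$, and still contains an unmarked induced $P_4$ — necessarily of the form $v_1v_2v_3v_4$ with $v_1,v_4\in I$ and $v_2,v_3\in K$ because $G$ is split — branch into the five children $(G/v_1v_2,k-1,T)$, $(G/v_3v_4,k-1,T)$, $(G/v_2v_3,k-1,T)$, $(G,k,T\cup\{v_2\})$, $(G,k,T\cup\{v_3\})$, and in the last two mark every induced $P_4$ containing both $v_2$ and $v_4$, resp.\ both $v_3$ and $v_1$. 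Every branch either decreases $k$ or increases $|T|$, so the tree has height at most $2k$ and $O(5^{2k})$ leaves, each reached in polynomial time (locating an unmarked $P_4$, performing one contraction, and updating the marks are all polynomial). Correctness of this reduction to the leaves is precisely Proposition~4.2, and it relies on the earlier observation that contracting an edge of a split graph creates no new induced $P_4$, so the invariant ``destroy every induced $P_4$'' is maintained along the tree.

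Next I would plug in the per-leaf test. For a leaf $(G',k',T')$ in which every induced $P_4$ is marked, $G'-T'$ is $P_4$-free, hence a threshold graph with a dominating vertex $u\in K'-T'$. Computing $P(w)$ for each $w\in K'-T'$, grouping the vertices with $P(w)\neq\emptyset$ into $R_1,\dots,R_d$ (retaining the $2k'+1$ largest-degree members of each group, with $d<2^{2k'}$), and setting $R=R_1\cup\cdots\cup R_d$ all take polynomial time. By Corollary~4.5, a yes-instance leaf has a solution inside $E[T'\cup R\cup\{u\}]$, whose vertex set has size $|T'|+|R|+1 \le 2k' + (2k'+1)2^{2k'} + 1 = 2^{O(k)}$. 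Exhaustive search over the at-most-$k'$-element subsets of $E[T'\cup R\cup\{u\}]$ therefore decides the leaf in $O\!\left(\binom{|T'|+|R|+1}{2}^{k'}\right)=\bigl(2^{O(k)}\bigr)^{k'}=2^{O(k^2)}$ time. Multiplying the $O(5^{2k})$ leaves by this per-leaf cost and by the polynomial overhead everywhere gives total running time $2^{O(k^2)}n^{O(1)}$, matching the bound of Theorem~\ref{splitfpt}.

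The real content is not in this proof: the two delicate points are already packaged as earlier statements. The first is soundness of the five-way branching (Proposition~4.2), especially its fourth and fifth cases, where no contraction is performed and instead one merely records that $v_2$ (resp.\ $v_3$) must eventually be touched, marking a family of $P_4$'s so that later branches may legitimately ignore them. The second is the localization of the solution to $E[T'\cup R\cup\{u\}]$ (Lemmas~4.3 and~4.4), which swap an edge at a clique vertex $w\notin T'$ for an edge at the dominating vertex $u$, or for an edge at a representative $w'$ in the same group $R_j$, using $N[w]\subseteq N[u]$ resp.\ $N[w]\subseteq N[w']$ together with the fact that any $P_4$ created by such a swap is already marked. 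For the theorem statement itself the only genuine obstacle is bookkeeping: checking that $|R|$ stays single-exponential in $k$ so that the final exhaustive search contributes only $2^{O(k^2)}$, and that every other step (building the search tree, computing the $P(\cdot)$ values, forming the groups $R_i$) runs in polynomial time, which it plainly does.
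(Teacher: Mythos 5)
Your proposal is correct and follows the paper's own proof essentially verbatim: the same five-way branching on an induced $P_4$ with marking for the two non-contraction branches, the same $O(5^{2k})$ bounded search tree, the same localization of a solution at each leaf to $E[T'\cup R\cup\{u\}]$ via the dominating vertex $u$ and the groups $R_1,\dots,R_d$, and the same $2^{O(k'^2)}$ exhaustive search per leaf. The only addition is the explicit arithmetic bounding $|R|$ and the per-leaf cost, which the paper also states; there is no substantive divergence.
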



\section{Incompressibility}

In this section, we show that {\sc Split Contraction} and {\sc Threshold Contraction} on split graphs are very unlikely to have polynomial kernels. To this end, we give polynomial parameter transformations from two domination problems: {\sc Red-Blue Dominating Set} and {\sc One-Sided Dominating Set}. A \emph{polynomial parameter transformation} from a problem $\mathcal{Q}$ to another problem $\mathcal{Q'}$ is a polynomial-time computable function that maps $(I,k)$ to $(I',k')$ such that: (a) $(I,k)\in\mathcal{Q}\Leftrightarrow (I',k')\in\mathcal{Q'}$; (b) $k'\leq h(k)$ for some computable function $h$. Bodlaender \emph{et al.}~\cite{bodlaender08} proved that if $\mathcal{Q}$ is incompressible and there is a polynomial parameter transformation from $\mathcal{Q}$ to $\mathcal{Q'}$, then $\mathcal{Q'}$ is incompressible as well. {\sc Red-Blue Dominating Set} and {\sc One-Sided Dominating Set} are defined as follows:

\vskip 0.25cm
\begin{tabular}{l l}
\multicolumn{2}{l}{\sc Red-Blue Dominating Set}\\
{\it Instance}: & Bipartite graph $G=(X,Y;E)$ and an integer $t$.\\
{\it Question}: & Does $Y$ have a subset of at most $t$ vertices that dominates $X$?\\
{\it Parameter}: & $|X|,t$.\\
\end{tabular}
\vskip 0.25cm

\vskip 0.25cm
\begin{tabular}{l l}
\multicolumn{2}{l}{\sc One-Sided Dominating Set}\\
{\it Instance}: & Bipartite graph $G=(X,Y;E)$ and an integer $t$.\\
{\it Question}: & Does $X$ have a subset of at most $t$ vertices that dominates $Y$?\\
{\it Parameter}: & $|X|$.\\
\end{tabular}
\vskip 0.25cm

These two problems of similar definitions have different meanings. If we consider the set $X$ of the input bipartite graph as the ``small side'' since the cardinality of $X$ is the parameter of the problem, and the set $Y$ as the ``large side'' since the cardinality of $Y$ could be very large, the {\sc Red-Blue Dominating Set} problem is to find a set in ``large side'' to dominate ``small side'' while the {\sc One-Sided Dominating Set} problem is to find a set in ``small side'' to dominate ``large side''.

Dom \emph{et al.}~\cite{dom09} proved that {\sc Red-Blue Dominating Set} is incompressible using the Colors and IDs technique. This result is applicable to several other incompressible problems such as {\sc Steiner Tree}, {Connected Vertex Cover}~\cite{dom09}, and {\sc Tree Contraction}~\cite{heggernes11c}.

\vskip 0.25cm
\begin{theorem}[Dom \emph{et al.}\cite{dom09}]\label{rbdsincompress}
{\sc Red-Blue Dominating Set} admits no polynomial kernel unless $NP\subseteq coNP/poly$.
\end{theorem}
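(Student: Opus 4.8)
The plan is to establish incompressibility through an \emph{OR-cross-composition} of {\sc Red-Blue Dominating Set} into itself: since the unparameterized problem is NP-hard, such a composition shows that a polynomial kernel would imply $NP\subseteq coNP/poly$. Concretely I would implement the \emph{Colors and IDs} scheme of Dom \emph{et al.}: the \emph{colors} force every small solution of the composed instance into a rigid ``one vertex per slot'' shape, and the \emph{identifiers} encode the indices of the $q$ input instances so that this rigid solution is forced to come entirely from a single instance, which is therefore a yes-instance.

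First I would use the polynomial equivalence relation that groups the inputs $(G_i,t)$ by the pair $(|X|,t)$, so that all $q$ instances to be composed share a red side $X$ of size $n$ (identify them by an arbitrary fixed bijection) and the same budget $t$. Give instance $i$ an identifier $b(i)\in\{0,1\}^{s}$ with $s=\lceil\log_2 q\rceil$. Build $G^{*}$ as follows. Its blue side contains a private copy $y^{(c)}$ of every blue vertex $y$ of every $G_i$, for every colour $c\in[t]$; the colour class $Y^{*}_{c}$ consists of all the $c$-copies. Its red side is $X\cup\{p_1,\dots,p_t\}\cup\{D^{01}_{c,j},D^{10}_{c,j} : c\in[t-1],\, j\in[s]\}$, so $|X^{*}|=n+O(t\log q)$. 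For a copy $y^{(c)}$ with $y$ from instance $i$: it keeps the original edges $xy^{(c)}$ for $x\in N_{G_i}(y)$; it is joined to $p_c$ only; as the left endpoint of the consecutive pair $(c,c+1)$ it is joined to $D^{01}_{c,j}$ iff $b_j(i)=1$ and to $D^{10}_{c,j}$ iff $b_j(i)=0$; and as the right endpoint of the pair $(c-1,c)$ it is joined to $D^{01}_{c-1,j}$ iff $b_j(i)=0$ and to $D^{10}_{c-1,j}$ iff $b_j(i)=1$. Finally set $t^{*}=t$. The new parameter $|X^{*}|+t^{*}=n+O(t\log q)$ is polynomial in $\max_i|G_i|+\log q$, as required.

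For the ``only if'' direction, a cover $\{z_1,\dots,z_t\}\subseteq Y(G_i)$ of $X$ in $G_i$ (padded with repeats if smaller) lifts to $\{z_1^{(1)},\dots,z_t^{(t)}\}$, which dominates $X$ exactly as before, dominates each $p_c$ via $z_c^{(c)}$, and dominates every $D$-vertex because all the $z_c$ share the identifier $b(i)$. For the ``if'' direction, any solution of size $\le t$ must meet each class $Y^{*}_{c}$ in order to dominate $p_c$, hence equals $\{y_1^{(1)},\dots,y_t^{(t)}\}$ with $y_c$ from some instance $\iota(c)$; since the only blue neighbours of $D^{01}_{c,j}$ and $D^{10}_{c,j}$ lie in colours $c$ and $c+1$, dominating both of them is equivalent to $b_j(\iota(c))=b_j(\iota(c+1))$, so all the $\iota(c)$ collapse to a single index $i^{*}$ and $\{y_1,\dots,y_t\}$ certifies $(G_{i^{*}},t)$ as a yes-instance.

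The part I expect to need the most care is precisely the identifier gadget, because domination is a monotone covering condition, so ``mixing'' blue vertices from different instances can never fail to cover \emph{more} red vertices---one cannot penalise mixed solutions directly. The way around this is exactly the two-stage structure above: spend the colours first to pin the solution to one vertex per slot, and only then enforce consistency of the $O(\log q)$ identifier bits, using for each bit $j$ and each consecutive colour pair the two red vertices $D^{01}_{c,j},D^{10}_{c,j}$, each reachable only from two adjacent colour classes. Verifying that these two vertices together express ``bit $j$ agrees across colours $c$ and $c+1$'', and in particular that they are automatically dominated in the intended single-instance solution, is the calculation that must be done carefully; the remaining ingredients---NP-hardness of the problem, the polynomial bound on the parameter, and the fact that a cross-composition rules out polynomial kernels unless $NP\subseteq coNP/poly$---are standard.
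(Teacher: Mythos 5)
The paper does not prove this theorem: it is Theorem~\ref{rbdsincompress}, stated as a cited result of Dom, Lokshtanov and Saurabh~\cite{dom09}, and no proof appears in the paper. So there is nothing in this paper against which your argument can be literally compared.

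That said, your reconstruction is a correct and reasonably faithful rendering of the Colors and IDs technique that Dom et al.\ introduced precisely for this statement. The two key checks go through. First, the budget argument: with $t$ colour classes and the pin vertices $p_c$ each adjacent only to the blue copies $Y^*_c$ of colour $c$, any dominating set of $X^*$ of size at most $t^*=t$ must contain exactly one copy per colour, so it has the form $\{y_1^{(1)},\dots,y_t^{(t)}\}$ with $y_c$ drawn from some instance $\iota(c)$. Second, the identifier gadget: with your adjacency rule, $D^{01}_{c,j}$ is dominated iff $b_j(\iota(c))=1$ or $b_j(\iota(c+1))=0$, and $D^{10}_{c,j}$ iff $b_j(\iota(c))=0$ or $b_j(\iota(c+1))=1$; both are dominated exactly when $b_j(\iota(c))=b_j(\iota(c+1))$, so dominating all $2s(t-1)$ such red vertices collapses $\iota(1)=\cdots=\iota(t)$ to a single index, which then certifies a yes-instance among the originals. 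The forward direction (lifting a genuine dominating set of a single $G_i$) is immediate, since all copies carry the same identifier $b(i)$. The parameter bound $|X^*|+t^*=n+O(t\log q)$ is polynomial in $\max_i|G_i|+\log q$, as required. Two cosmetic remarks: Dom et al.\ (2009) actually worked in the earlier Bodlaender--Downey--Fellows--Hermelin composition framework and routed through an intermediate \emph{colored} Red-Blue Dominating Set problem before de-coloring, whereas you fold the coloring and the cross-composition into a single construction; these are equivalent in power, and your presentation is arguably tidier. The argument is otherwise sound, and you correctly identified the one genuinely delicate point (that monotonicity of domination means mixed solutions cannot be penalized directly, forcing the two-stage colours-then-IDs structure).
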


We point out that {\sc One-Sided Dominating Set} can be solved in $O(2^{|X|}|I|^{O(1)})$ time by enumerating all subsets of $X$, and is incompressible. The proof will appear in full version of the paper~\cite{cai13} by the same authors.

\vskip 0.25cm
\begin{theorem}\label{osdsincompress}
{\sc One-Sided Dominating Set} admits no polynomial kernel unless $NP\subseteq coNP/poly$.
\end{theorem}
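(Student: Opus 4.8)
The plan is to prove incompressibility by an OR‑cross‑composition from a classical NP‑hard problem into {\sc One-Sided Dominating Set} parameterized by $|X|$. Note first that this problem is exactly {\sc Set Cover} parameterized by the number of sets (the sets being $\{N_G(x):x\in X\}$), which trivially has a kernel of size $2^{|X|}$, so a cross‑composition is the right tool to rule out a polynomial one. I would compose from {\sc 3-SAT} (any NP‑hard problem works). Using the polynomial equivalence relation that groups CNF formulas by their number of variables $n$ and number of clauses $m$ (malformed inputs forming a trivial class), and padding the number of input formulas with trivially unsatisfiable dummies of the same $(n,m)$ so that it becomes a power of two $q=2^{r}$, I may assume the input is $\phi_1,\dots,\phi_q$, all over the variables $x_1,\dots,x_n$, with $m$ clauses each, and the task is to decide whether some $\phi_i$ is satisfiable.

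The composed instance is a bipartite graph $H=(A,B;F)$ with budget $t'=n+r$. The parameter side $A$ consists of literal vertices $x_\ell^{+},x_\ell^{-}$ for $\ell\in[n]$ and selector vertices $p_s,\bar p_s$ for $s\in[r]$, so $|A|=2n+2r$, which is bounded by a polynomial in $\max_i|\phi_i|+\log q$ as the framework requires. The side $B$ contains: a vertex $v_\ell$ adjacent to $x_\ell^{+}$ and $x_\ell^{-}$ (forcing one literal per variable into any solution); a vertex $e_s$ adjacent to $p_s$ and $\bar p_s$ (forcing one ``bit value'' per position); and a clause vertex $c_{i,j}$ for every clause $j$ of every $\phi_i$. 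We join $c_{i,j}$ to $x_\ell^{+}$ (resp.\ $x_\ell^{-}$) iff $x_\ell$ occurs positively (resp.\ negatively) in clause $j$ of $\phi_i$, and we join $c_{i,j}$ to $p_s$ for every $s$ with $i[s]=0$ and to $\bar p_s$ for every $s$ with $i[s]=1$, where $i[s]$ denotes the $s$‑th bit of $i$. Since the $n+r$ pairs $\{x_\ell^{+},x_\ell^{-}\}$ and $\{p_s,\bar p_s\}$ are disjoint, each contributes at least one vertex to any solution, so with budget exactly $n+r$ a solution $S\subseteq A$ picks exactly one vertex from each pair; reading $p_s\in S$ as ``$z[s]=1$'' and $x_\ell^{+}\in S$ as ``$\alpha(x_\ell)=1$'', such an $S$ encodes a truth assignment $\alpha$ together with an index $i^{*}=z\in\{0,1\}^{r}$.

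I would then check both directions. If some $\phi_{i^{*}}$ is satisfiable, take a satisfying $\alpha$ and set $z=i^{*}$: every $v_\ell$ and $e_s$ is dominated; every $c_{i,j}$ with $i\neq i^{*}$ is dominated because $z$ disagrees with $i$ in some bit $s$, which puts the unique selector neighbour of $c_{i,j}$ at position $s$ into $S$; and every $c_{i^{*},j}$ is dominated by the literal vertex of a literal that $\alpha$ satisfies in clause $j$. Conversely, from any solution $S$ of size $\le t'$ the forcing argument recovers $\alpha$ and $z=i^{*}$ exactly; when $z=i^{*}$, none of the selector neighbours of any $c_{i^{*},j}$ lies in $S$, so each $c_{i^{*},j}$ must be dominated by a literal vertex, i.e.\ clause $j$ of $\phi_{i^{*}}$ is satisfied by $\alpha$; as this holds for all $j$, $\phi_{i^{*}}$ is satisfiable, and since every dummy formula is unsatisfiable, $i^{*}$ indexes a genuine input formula. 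Hence $H$ has a one‑sided dominating set of size $\le t'$ iff some $\phi_i$ is satisfiable, and the cross‑composition framework yields that {\sc One-Sided Dominating Set} has no polynomial kernel unless $NP\subseteq coNP/poly$.

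The step I expect to be the crux is designing the selector gadget so that clause vertices of every \emph{non}‑selected formula become ``free'' while those of the selected formula must still be handled genuinely. This is fragile in two ways: a solution could try to cheaply take both vertices of some selector pair (which would make clause vertices free at no real cost), or it could commit to an index $i^{*}$ with no associated formula. Both are neutralized above — by spending the whole budget $n+r$ on the $n+r$ forced pairs, taking both vertices of a pair is simply unaffordable; and by padding the formula count to a power of two with unsatisfiable dummies, every $i^{*}\in\{0,1\}^{r}$ has an actual formula. Verifying these budget and padding details, and that the construction runs in time polynomial in $\sum_i|\phi_i|$, is routine bookkeeping.
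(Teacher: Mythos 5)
The paper never actually proves Theorem~\ref{osdsincompress}: the text merely asserts the result and defers the argument to a ``full version'' cited as~\cite{cai13}, so there is no in-paper proof to compare your attempt against. Judged on its own, your OR-cross-composition from {\sc 3-SAT} is sound. The polynomial equivalence relation (grouping by $(n,m)$) is legitimate; the side $A$ that becomes $X$ has size $2n+2r$, which is polynomial in $\max_i|\phi_i|+\log q$ as required; the pair vertices $v_\ell$ and $e_s$, each of degree two, together with the tight budget $t'=n+r$, force any solution to take exactly one endpoint of each of the $n+r$ disjoint pairs and thereby encode an assignment $\alpha$ and an index $z\in\{0,1\}^r$. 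The selector wiring (join $c_{i,j}$ to $p_s$ when $i[s]=0$, to $\bar p_s$ when $i[s]=1$, and read $p_s\in S$ as $z[s]=1$) makes $c_{i,j}$ dominated by a selector vertex precisely when $i\neq z$, so only the selected formula's clause vertices need literal domination, and the padding with unsatisfiable dummies ensures the recovered $i^*$ indexes a genuine input; both directions of the equivalence then go through. One small remark: the existence of a $2^{|X|}$-time algorithm does not by itself give a $2^{|X|}$-size kernel, but here one does exist by collapsing vertices of $Y$ with identical (or non-minimal) neighborhoods, so your framing is fine. As the paper's own proof is unavailable, I cannot say whether the authors used a cross-composition (as you do) or a polynomial parameter transformation from an already-known incompressible problem such as {\sc Red-Blue Dominating Set} or {\sc Colored Red-Blue Dominating Set}; either route is plausible, and yours has the advantage of being self-contained.
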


\vskip 0.25cm
We now give a polynomial parameter transformation from {\sc Red-Blue Dominating Set} to {\sc Split Contraction}. Without loss of generality, we may assume that every vertex in $X$ has at least one neighbor in $Y$ and then $t\leq|X|$. Our result is inspired by the reduction for {\sc Tree Contraction} by Heggernes \emph{et al.}~\cite{heggernes11c}. 

\begin{theorem}\label{splitincompress}
{\sc Split Contraction} admits no polynomial kernel unless $NP\subseteq coNP/poly$.
\end{theorem}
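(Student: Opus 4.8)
The plan is to establish incompressibility by exhibiting a polynomial parameter transformation from {\sc Red-Blue Dominating Set}; since that problem is incompressible by Theorem~\ref{rbdsincompress}, the result of Bodlaender \emph{et al.}~\cite{bodlaender08} then yields the claim. Given an instance $(G=(X,Y;E),t)$ of {\sc Red-Blue Dominating Set} with $t\le|X|$, I would construct in polynomial time an instance $(G',k)$ of {\sc Split Contraction} with $k$ a polynomial in $|X|$ and $t$ (essentially $k=t$, together with a polynomial number of auxiliary vertices that are never contracted), such that $G'$ is contractible to a split graph with at most $k$ edge contractions if and only if $Y$ has a subset of at most $t$ vertices dominating $X$.

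For the construction I would keep $X$ and $Y$ together with the edges of $G$, keeping $X$ as an independent set, and add a designated vertex $a$ adjacent to all of $Y$ but not to $X$, plus a ``forcing'' structure: sufficiently many pendant-type vertices attached to $a$ and to each vertex of $X$ (more than $3k$ per anchor) so that, because a vertex with two non-adjacent surviving neighbours cannot lie in the independent part of a split graph, no set of at most $k$ contractions can prevent $a$ and every $x\in X$ from ending up in the clique side of the resulting split graph. The point of the design is that, once $a$ and all of $X$ lie in that clique, the part containing $a$ must become adjacent in the contracted graph to the part of each $x\in X$; since $a$'s neighbours are confined to $Y$, the only way to achieve this cheaply is to contract $a$ together with a set $Y'\subseteq Y$, and then the part of $a$ is adjacent to exactly those $x$ having a neighbour in $Y'$. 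Thus $G'$ becomes split precisely when $Y'$ dominates $X$, at a cost of $|Y'|$ contractions.

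The forward direction is then routine: from a dominating set $Y'$ with $|Y'|\le t$, contract the $|Y'|\le t\le k$ edges joining $a$ to $Y'$ and verify directly that the resulting graph is split, with clique consisting of the merged vertex together with $X$ and independent part consisting of everything else. For the backward direction, given a solution $S$ with $|S|\le k$, the forcing gadgets pin $a$ and all of $X$ into the clique of the resulting split graph; I would then run an exchange argument, replacing any contracted edge not incident with $a$ by a suitable edge incident with $a$ (using that $X$ is independent, so absorbing a single vertex of $X$ does not shortcut the requirement), to normalise $S$ to a set of edges joining $a$ to some $Y'\subseteq Y$, and conclude that $Y'$ must dominate $X$ with $|Y'|=|S|\le t$.

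The main obstacle is the backward direction, and specifically ruling out ``cheap escapes'': contraction strategies that make $G'$ split within budget $k$ without corresponding to a dominating set, for instance by merging many vertices of $X$ with each other or with $Y$, or by routing the part of $a$ through $Y$ in some non-canonical way. The construction must be calibrated so that any such bypass costs strictly more than $k$ — this is why $X$ is kept independent rather than turned into a clique, and why $|X|$ is large relative to the fraction of the budget that could be spent merging vertices of $X$; with $k\approx t$, merging $X$ directly already costs about $|X|>k$ whenever the instance is nontrivial. Checking that the exchange argument really normalises every solution to the intended form, and that the forcing gadgets survive all $\le k$ contractions, is where the technical care will be needed.
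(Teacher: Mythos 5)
There is a genuine gap in your forward direction, and it is not a minor calibration issue.

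You keep $X$ as an independent set, and with budget $k\approx t$ you contract $a$ together with a dominating set $Y'\subseteq Y$ and claim the result is a split graph whose clique is the merged vertex $a^*$ together with $X$. But contracting $a$ with $Y'$ does nothing to the pairwise non-adjacencies inside $X$: the set $\{a^*\}\cup X$ is a star, not a clique. In fact your own forcing argument (pendants on each $x$ force each $x$ into the clique side) pins more than one vertex of $X$ into the clique, which is impossible while $X$ remains independent. So under the construction and budget you describe, there is no split graph to be had, and the equivalence fails in the easy direction.

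The fix, and the idea your sketch is missing, is that $X$ and $a$ must not merely land on the clique side individually but must all be contracted \emph{into a single witness set}. That is what the paper does. Its construction adds a clique $C$ of size $|X|+t+3$ (with a designated $u\in C$ adjacent to all of $Y$), attaches $|X|+t+1$ pendant leaves to each $x\in X$, and sets the budget to $k=|X|+t$, not $t$. In the forward direction, contracting the connected set $\{u\}\cup Y'\cup X$ (size $|X|+t+1$) into one vertex costs exactly $|X|+t$ edges and yields a split graph. In the backward direction, surviving vertices $a,b\in C\setminus\{u\}$ together with any surviving leaf $x'$ of $x$ form an induced $2K_2$ whenever $x$ and $u$ lie in different witness sets, which forces $X\cup\{u\}$ into one witness set $W$; connectivity of $W$ then forces $W\cap Y$ to dominate $X$, and the budget bounds $|W\cap Y|\le |F|-|X|\le t$. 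Note $k=|X|+t\le 2|X|$ is still polynomial in the parameter of {\sc Red-Blue Dominating Set}, so the transformation remains parameter-preserving. Your normalisation/exchange argument in the backward direction is in the right spirit, but without raising the budget to about $|X|+t$ and without collapsing all of $X\cup\{u\}$ into one vertex, there is no valid forward direction to match it, and the reduction does not go through.
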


\begin{proof}
Given a bipartite graph $G=(X,Y;E)$ and a positive integers $t$, we construct a graph $G'$ from $G$ by creating a clique $C$ of size $|X|+t+3$ where a designated vertex $u\in C$ is made adjacent to all vertices of $Y$, and for every $v\in X$ adding $|X|+t+1$ new leaves appending to $v$. We claim that $Y$ has a subset of at most $t$ vertices that dominates $X$ iff $G'$ can be made into a split graph by contracting at most $|B|+t$ edges.

Suppose that $Y'$ is a $t$-subset of $Y$ that dominates $X$. Since vertices of $\{u\}\cup Y' \cup X$ induce a connected graph, we can make these $|X|+t+1$ vertices into a single vertex by using $|X|+t$ edge contractions. The subgraph of $G'$ induced by $\{u\}\cup Y \cup X$ is contracted to a star, and $G'$ is modified into a split graph after contractions.

Conversely, suppose that $G'$ contains at most $|X|+t$ edges $F$ whose contraction results in a split graph. Note that at least two vertices $a$ and $b$ other than $u$ in $C$ survive after contracting $F$, and for each vertex $x\in X$ there exists a leaf $x'$ appending to $x$ that is not involved in any edge contraction. If any $x\in X$ and $u$ are in the different witness set, then $\{x,x',a,b\}$ form an induced $2K_2$, contradicting to the fact that $G'/F$ is a split graph. Therefore all vertices in $X\cup\{u\}$ are in the same witness set. Observe that each path starting from one vertex in $X$ to $u$ must go through some vertices in $Y$, which implies that $X$ is dominated by a subset $I$ of $Y$ containing exactly those vertices that are in the same witness set with $u$. Thus we obtain a solution of $G$ with $|I|\leq|F|-|X|\leq t$ vertices.
\end{proof}

In order to show the incompressibility of {\sc Threshold Contraction} in a clear sketch, we first give a polynomial parameter transformation from {\sc One-Sided Dominating Set} to the following {\sc One-Sided Domatic Number} problem, and next give another transformation from {\sc One-Sided Domatic Number} to {\sc Threshold Contraction} on split graphs.

\vskip 0.25cm
\begin{tabular}{l l}
\multicolumn{2}{l}{\sc One-Sided Domatic Number}\\
{\it Instance}: & Bipartite graph $G=(X,Y;E)$ and integer $t$.\\
{\it Question}: & Can we partition $X$ into $t$ disjoint sets such that vertices in each\\
 & set dominate $Y$?\\
{\it Parameter}: & $|X|$.\\
\end{tabular}
\vskip 0.25cm

{\sc One-Sided Domatic Number} can be solved in $O(2^{|X|log|X|}|I|^{O(1)})$ time by enumerating all partitions for $X$, and is incompressible.

\vskip 0.25cm
\begin{theorem}\label{osdnincompress}
{\sc One-Sided Domatic Number} admits no polynomial kernel unless $NP\subseteq coNP/poly$.
\end{theorem}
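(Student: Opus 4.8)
The plan is to give a polynomial parameter transformation from {\sc One-Sided Dominating Set} to {\sc One-Sided Domatic Number} and then appeal to Theorem~\ref{osdsincompress} together with the result of Bodlaender et al.~\cite{bodlaender08}. Start from an instance $(G=(X,Y;E),t)$ of {\sc One-Sided Dominating Set} with $|X|=n$. First dispose of the degenerate cases: if some vertex of $Y$ has no neighbour in $X$, output a fixed no-instance; if $t\ge n$ (recall every vertex of $X$ has a neighbour, so $X$ itself dominates $Y$), output a fixed yes-instance; this leaves $1\le t\le n-1$. Now build a bipartite graph $G'=(X',Y';E')$ by setting $X'=X\cup Z$ with $Z=\{z_1,\dots,z_{n-t}\}$ fresh, each $z_i$ joined to every vertex of $Y$; $Y'=Y\cup W$ with $W=\{w_1,\dots,w_{n-t}\}$ fresh, each $w_j$ joined to every vertex of $X$ (and to no $z_i$); and keep all edges of $G$. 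Output $(G',t')$ with $t'=n-t+1$. This is computable in polynomial time, $G'$ is bipartite, and $|X'|=2n-t\le 2|X|$, so the new parameter is polynomially bounded in the old one.

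For the forward implication I would argue as follows: given $D\subseteq X$ with $N(D)\supseteq Y$ and $|D|\le t$, enlarge $D$ to size exactly $t$ by adding arbitrary vertices of $X$ (possible since $n\ge t$, and the set stays dominating), and take the partition of $X'$ consisting of the block $D$ together with the $n-t$ blocks $\{x,z_i\}$, one per vertex $x\in X\setminus D$. Each $\{x,z_i\}$ dominates $Y$ through $z_i$ and dominates $W$ through $x$; $D$ dominates $Y$ by hypothesis and dominates $W$ through any of its (nonempty set of) vertices; these $t'=n-t+1$ nonempty blocks partition $X'$, so $(G',t')$ is a yes-instance.

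For the converse, take a partition of $X'$ into $t'=n-t+1$ dominating sets of $Y'$. Since every vertex of $W$ is adjacent only to vertices of $X$, every block contains a vertex of $X$; since $|Z|=n-t<t'$, at least one block is disjoint from $Z$, hence contained in $X$. Let $q\ge 1$ be the number of blocks contained in $X$; the other $t'-q$ blocks each use at least one vertex of $X$, so the $q$ blocks inside $X$ use at most $n-(t'-q)=t-1+q$ vertices in total, and the smallest of them has at most $(t-1+q)/q\le t$ vertices. That smallest block is a subset of $X$ of size at most $t$ dominating $Y'$, hence dominating $Y$, which is a solution of the original instance.

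The only delicate point is the role of the gadget $W$: it is precisely the device that rules out singleton blocks $\{z_i\}$ and forces every block of the partition to ``pay'' with a vertex of $X$, so that demanding as many as $n-t+1$ blocks leaves no room for a large all-$X$ dominating block; getting this counting inequality to bite is where the construction has to be set up carefully. Everything else is routine bookkeeping, including the harmless observation that for {\sc One-Sided Domatic Number} asking for exactly $t'$ classes is equivalent to asking for at least $t'$, since merging two classes preserves domination.
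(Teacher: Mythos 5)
Your construction and counting argument are essentially the paper's: add $|X|-t$ new $X$-side vertices adjacent to all of $Y$ to pad the partition, add a $Y$-side gadget adjacent only to $X$ to force every block to pay with an $X$-vertex, ask for $|X|-t+1$ dominating blocks, then pigeonhole to extract a small all-$X$ dominating set. The only (immaterial) difference is that you use $n-t$ gadget vertices $W$ where the paper uses a single vertex $w$, and you explicitly dispose of the degenerate cases the paper leaves implicit; the argument is otherwise the same.
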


\begin{proof}
Given a bipartite graph $G=(X,Y;E)$ and an integer $t$, we construct a graph $G'$ by creating a new vertex $w$ that is made adjacent to every vertex of $X$, and a set $Z$ of $|X|-t$ vertices that are made adjacent to every vertex of $Y$.

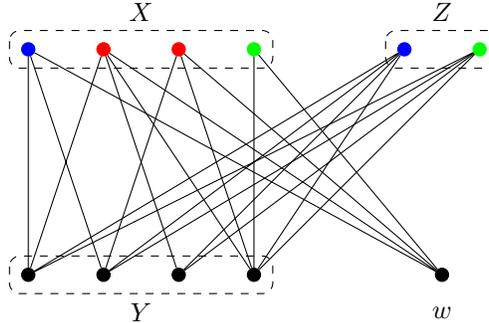
\begin{figure}[H]
\centering
\begin{tikzpicture}
  \tikzset{every node/.style={circle,draw,inner sep=0pt,minimum size=5pt}}
  \node[blue,fill=blue] at (1,4)(a1){};
  \node[red,fill=red] at (2,4)(b1){};
  \node[red,fill=red] at (3,4)(c1){};
  \node[green,fill=green] at (4,4)(d1){};
  \node[fill=black] at (1,1)(a2){};
  \node[fill=black] at (2,1)(b2){};
  \node[fill=black] at (3,1)(c2){};
  \node[fill=black] at (4,1)(d2){};
  \draw (b1)--(a2)--(a1)--(b2)--(c1)--(d2);
  \draw (c2)--(b1)--(d2)--(d1);
  \draw[rounded corners,dashed] (0.75,4.25) rectangle (4.25,3.75);
  \draw[rounded corners,dashed] (0.75,1.25) rectangle (4.25,0.75);
  \bubble(2.5,4.5){$X$};
  \bubble(2.5,0.5){$Y$};

  \node[blue,fill=blue] at (6,4)(g1){};
  \node[green,fill=green] at (7,4)(g2){};
  \node[fill=black] at (6.5,1)(h){};
  \draw (a2)--(g1)--(b2)--(g2)--(a2); \draw (c2)--(g1)--(d2)--(g2)--(c2);
  \draw (a1)--(h)--(b1); \draw (c1)--(h)--(d1);
  \draw[rounded corners,dashed] (5.75,4.25) rectangle (7.25,3.75);
  \bubble(6.5,4.5){$Z$};
  \bubble(6.5,0.5){$w$};
\end{tikzpicture}
\caption{An example of construction of $G'$ with $|X|=4,t=2$.}
\end{figure}

We claim that $X$ has a subset of at most $t$ vertices that dominates $Y$ iff $X\cup Z$ can be partitioned into $|X|-t+1$ disjoint sets such that vertices in each set dominate $Y\cup\{w\}$.

Suppose that $X$ has a subset $S$ of $t$ vertices that dominates $Y$. Obviously $S$ dominates $Y\cup\{w\}$. Note that every vertex in $X\setminus S$ is adjacent to $w$ and every vertex in $Z$ dominates $Y$. Thus vertices in $(X\setminus S)\cup Z$ can be partitioned into $|X|-t$ pairs such that each pair of vertices dominates $Y\cup\{w\}$. Included the set $S$, we obtain $|X|-t+1$ disjoint dominating sets for $Y\cup\{w\}$ in $X\cup Z$.

Conversely, suppose that $X\cup Z$ has a disjoint partition $(S_1,\cdots, S_{|X|-t+1})$ such that each $S_i$ dominates $Y\cup\{w\}$. Note that each $S_i$ that contains some vertex of $Z$ must contain at least one vertex in $X$ because $w$ is only adjacent to vertices in $X$. Let $r$ be the number of sets $S_i$ that intersect $Z$ where $r\leq |Z|=|X|-t$. It is easy to see that there are at most $|X|-r$ vertices in $X$ that constitute $|X|-t+1-r$ disjoint dominating sets for $Y\cup\{w\}$, and thus there exists one dominating set containing at most $\frac{|X|-r}{|X|-t+1-r}=1+\frac{t-1}{|X|-t+1-r}\leq t$ vertices which is clearly a dominating set for $Y$.

Since $|X\cup Z|=2|X|-t$, the reduction is parameter preserving.
\end{proof}
\vskip 0.25cm

\begin{theorem}\label{thresholdincompress}
{\sc Threshold Contraction} on split graphs admits no polynomial kernel unless $NP\subseteq coNP/poly$.
\end{theorem}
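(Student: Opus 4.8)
The plan is to give a polynomial parameter transformation from \textsc{One-Sided Domatic Number} to \textsc{Threshold Contraction} on split graphs, and then invoke Theorem~\ref{osdnincompress} together with the result of Bodlaender \emph{et al.}~\cite{bodlaender08}. Given an instance $(G=(X,Y;E),t)$ of \textsc{One-Sided Domatic Number}, I would build a split graph $G^*$ in which the clique side encodes $Y$ (plus a few auxiliary vertices) and the independent side encodes $X$ (again with auxiliary padding). The guiding intuition is that contracting edges in a split graph to kill induced $P_4$'s forces the vertices of the clique side into a neighborhood-nested order, and each contraction ``merges'' an independent-set vertex into a clique vertex; a vertex of $X$ whose neighborhood $N_G(x)\subseteq Y$ gets merged onto the top clique vertex behaves like it dominates all of $Y$. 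So a partition of $X$ into $t$ dominating sets should correspond to a sequence of $|X|-t$ contractions (one per $X$-vertex that is \emph{not} a ``representative'' of its part) that linearizes everything, with $k=|X|-t$ the new parameter. Since $|X|$ is polynomially bounded in the \textsc{One-Sided Domatic Number} parameter, $k$ is too, so the transformation is parameter-preserving.

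Concretely, I would take the clique $K$ of $G^*$ to be $Y$ together with enough ``tower'' vertices $z_1,\dots,z_{|X|+2}$ (or similar) arranged with strictly increasing closed neighborhoods so that they are already linearly ordered, where the topmost vertex is adjacent to every vertex of $Y$ and of the independent side; this top vertex plays the role of the ``$u$'' in the Case-analysis of the \textsc{Threshold Contraction} algorithm. The independent side $I$ of $G^*$ consists of the original $X$-vertices, each $x$ joined to $N_G(x)\subseteq Y\subseteq K$, plus for each $x$ a large bundle of $|X|+t+1$ (or $2k+5$) private ``shield'' independent-set vertices of very small prescribed neighborhood, to prevent the solution from cheating by contracting edges incident to $X$ in the wrong place, mirroring the leaf-padding trick in the proof of Theorem~\ref{splitincompress}. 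One should also add a bottom clique vertex or a couple of universal-to-$I$ vertices so that an $X$-vertex that is \emph{not} contracted onto the top still fails to create a $P_4$ only if the remaining structure is already threshold; this is what links ``exactly $t$ parts survive uncontracted'' to ``$|X|-t$ contractions suffice.''

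For the forward direction, given a partition $X=S_1\,\dot\cup\,\cdots\,\dot\cup\,S_t$ into dominating sets of $Y$, pick one representative $x_i\in S_i$, and for every non-representative $x\in S_i$ contract the edge from $x$ to (a neighbor of $x$ that leads up to) the top clique vertex; each such vertex has a neighbor in $Y$ because $S_i$ dominates $Y$, and after all $|X|-t$ contractions the non-representative $X$-vertices have been absorbed into the top of the tower, leaving a split graph whose independent side ($t$ representatives plus shields) has nested neighborhoods, hence a threshold graph. For the converse, given a set $F$ of at most $k=|X|-t$ contractions making $G^*$ a threshold graph, I would argue (using the shield bundles, exactly as in Theorems~\ref{splitincompress} and~\ref{osdnincompress}) that no edge incident to a shield vertex is contracted, that the tower vertices survive and stay linearly ordered, and hence that at most $|X|-t$ of the $X$-vertices are ``touched''; the remaining at least $t$ untouched $X$-vertices must have pairwise non-nested... no — must each individually be dominated in the nested order, which forces the touched $X$-vertices, when grouped by which surviving clique vertex absorbed each of their neighbors, to leave $t$ classes, each of which is a dominating set of $Y$. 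Packaging the $\ge t$ untouched vertices and the merged-vertex equivalence classes into exactly $t$ dominating sets is the bookkeeping step.

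The main obstacle I expect is the converse direction: showing that an \emph{arbitrary} cheap contraction set can be normalized into the ``clean'' form (contract only $X$-to-top edges, one per non-representative, never touch shields or the tower) without increasing its size. The \textsc{Threshold Contraction} algorithm in Section~4 already provides most of the needed machinery --- Lemmas~4.9 and~4.10 and Corollary~4.11 say a minimal solution can be pushed into $E[T'\cup R\cup\{u\}]$ and that contracting onto the nesting-maximal vertex $u$ is always at least as good --- so the real work is choosing the gadget sizes ($|X|+t+O(1)$ shields, tower height $\ge 2k+O(1)$) so that these normalization steps apply and so that witness sets cannot secretly pool several $X$-vertices to ``dominate'' $Y$ with fewer than $t$ classes. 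I expect the inequality $k\le |X|$ from the \textsc{One-Sided Domatic Number} instance (via $t\le|X|$, established analogously to the remark before Theorem~\ref{splitincompress}) to keep all gadget sizes polynomial, so parameter-preservation is immediate once correctness is in hand.
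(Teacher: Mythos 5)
Your high-level plan (a polynomial parameter transformation from \textsc{One-Sided Domatic Number} with new parameter $k=|X|-t$, invoking Theorem~\ref{osdnincompress} and Bodlaender \emph{et al.}) matches the paper, but your concrete gadget is built the wrong way around, and this is a genuine gap rather than a detail. You place $Y$ on the clique side and $X$ on the independent side, then try to contract $|X|-t$ of the $X$-vertices ``onto the top'' of a tower. This does not encode a partition of $X$ into $t$ dominating sets: an $X$-vertex absorbed by the already-universal top vertex contributes nothing (the top already dominates $Y$), the $t$ surviving $X$-vertices on the independent side are only constrained to have \emph{nested} neighborhoods (which is unrelated to each being a dominating set), and there is no mechanism by which the $|X|-t$ contracted vertices organize themselves into $t$ groups each dominating $Y$. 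You half-notice this yourself (``the touched $X$-vertices, when grouped by which surviving clique vertex absorbed each of their neighbors, to leave $t$ classes\ldots'' is inconsistent with your stated plan to contract them all onto one top vertex), and the appeal to the Section~4 normalization lemmas cannot repair a construction that does not encode the right predicate.

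The paper's construction is the mirror image and avoids the problem entirely: the vertices $u_1,\dots,u_{|X|}$ corresponding to $X$ form a clique $K$ on the \emph{clique} side (augmented by a large auxiliary clique $A$), while the independent side contains an independent set $B$ adjacent only to $K$ and $|X|+1$ disjoint copies $I_1,\dots,I_{|X|+1}$ of $Y$ adjacent to $A$ and, per the original bipartite edges, to $K$. A budget of $|X|-t$ contractions inside $K$ partitions $K$ (hence $X$) into at least $t$ witness sets; a pigeonhole argument guarantees an untouched copy $I_p$ and untouched vertices $a\in A$, $b\in B$, and then the $\{2K_2,C_4,P_4\}$-freeness of the target forces every contracted witness vertex $w_j$ to be adjacent to \emph{all} of $I_p$, i.e.\ each $W_j$ dominates $Y$. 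The forward direction simply contracts each part $R_i\subseteq K$ to a single vertex $r_i$, using $\sum_i(|R_i|-1)=|X|-t$ edges, and the resulting closed neighborhoods $N[r_1]=\cdots=N[r_t]\supseteq N[A]$ give a threshold graph. The crucial insight you are missing is that putting $X$ on the clique side makes the witness-set partition of $K$ \emph{literally} the partition of $X$, so threshold-ness, not an ad hoc normalization, does the work of forcing each class to dominate.
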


\begin{proof}
Given a bipartite graph $G=(X,Y;E)$ where $X=\{x_1,\cdots,x_k\}$ and $Y=\{y_1,\cdots,y_n\}$, and a positive integer $t$, we construct a graph $G'$ as following:

\begin{itemize}
\item Create a clique $K$ of $|X|$ vertices $\{u_1,\cdots,u_k\}$, and a clique $A$ of size $2|X|+1$. Make $K\cup A$ into a large clique.
\item Create an independent set $B$ of $|X|+1$ vertices that are made adjacent to every vertex of $K$, and create $|X|+1$ disjoint independent sets of size $|Y|$: $I_l=\{v_1^{(l)},\cdots,v_n^{(l)}\}$ ($l=1,\cdots,|X|+1$). All vertices in $I_1\cup\cdots\cup I_{|X|+1}$ are made adjacent to every vertex of $A$.
\item For each $1\leq l\leq |X|+1$, make $u_i$ adjacent to $v_j^{(l)}$ iff $x_i y_j\in E$.
\item $G'=(K\cup A;B\cup I_1\cup\cdots\cup I_{|X|+1})$ is a split graph.
\end{itemize}

\begin{figure}[H]
\centering
\begin{tikzpicture}
  \tikzset{every node/.style={circle,draw,inner sep=0pt,minimum size=5pt}}
  \node[fill=black] at (0,3)(r){};
  \node[fill=black] at (0.75,3)(s){};
  \node[fill=black] at (1.5,3)(t){};
  \node[fill=black] at (0.4,1.5)(r1){};
  \node[fill=black] at (1.1,1.5)(s1){};
  \draw (r)--(r1)--(s)--(s1)--(t);
  \bubble(0.75,3.5){$X$};
  \bubble(0.75,1){$Y$};

  \node[fill=black] at (6,3.5)(k1){};
  \node[fill=black] at (7,3.75)(k2){};
  \node[fill=black] at (8,3.5)(k3){};
  \draw[rounded corners,dashed] (5.75,4) rectangle (8.25,3.25);
  \node[rectangle,rounded corners,minimum width=1cm,minimum height=0.5cm] at (9.5,3.75)(x){$A$};
  \draw (k1)--(k2)--(k3)--(x)--(k1)--(k3); \draw (x)--(k2);
  \node[rectangle,rounded corners,minimum width=1cm,minimum height=0.5cm] at (5,1)(y){$B$};
  \draw (k1)--(y)--(k2); \draw (y)--(k3);
  \node[fill=black] at (6.5,1)(i11){};
  \node[fill=black] at (7.5,1)(i12){};
  \draw[rounded corners,dashed] (6.25,1.25) rectangle (7.75,0.75);
  \draw (k1)--(i11)--(k2)--(i12)--(k3);
  \node[fill=black] at (9.5,1)(in1){};
  \node[fill=black] at (10.5,1)(in2){};
  \draw[rounded corners,dashed] (9.25,1.25) rectangle (10.75,0.75);
  \draw (k1)--(in1)--(k2)--(in2)--(k3);
  \draw (i11)--(x);
  \draw (i12)--(x);
  \draw (in1)--(x);
  \draw (in2)--(x);

  \bubble(8.5,1){$\cdots$};
  \bubble(7,4.25){$K$};
  \bubble(7.25,0.5){$I_1$};
  \bubble(10,0.5){$I_{k+1}$};
  \draw[line width=5pt,gray,-latex](2.5,2.25)--(4,2.25);
\end{tikzpicture}
\caption{An example of the reduction from {\sc Red-Blue Domatic Number} to {\sc Threshold Contraction}.}
\end{figure}
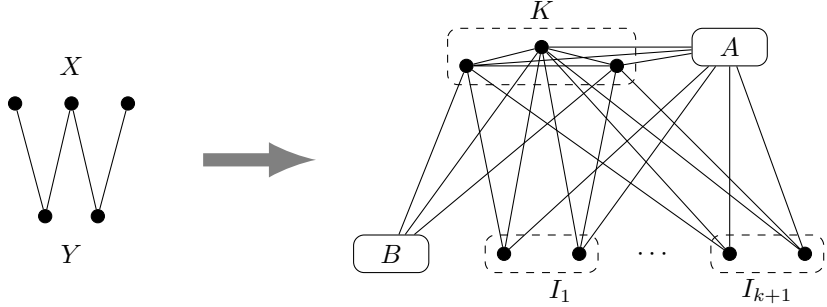

We claim that $X$ can be partitioned into $t$ disjoint dominating sets for $Y$ iff $G'$ can be modified into a threshold graph by contracting at most $|X|-t$ edges.

Suppose that the set $B$ has a disjoint partition $X=X_1\cup\cdots\cup X_t$ such that each $X_i$ dominates $Y$. We partition the set $\{u_1,\cdots,u_k\}$ into $t$ disjoint sets $\{R_1,\cdots,R_t\}$ corresponding to $\{X_1,\cdots,X_t\}$, and contract vertices in each $R_i$ to a single vertex $r_i$. The total number of edge contractions we use is $\Sigma_i (|R_i|-1)=(\Sigma_i |R_i|)-t=|X|-t$, and $r_i$ is adjacent to every vertex of $I_1\cup\cdots\cup I_{|X|+1}$ for each $1\leq i\leq t$ because $X_i$ dominates $Y$ in $G$. The closed neighbor sets of vertices in $K\cup A$ satisfy that $N[r_1]=\cdots=N[r_t]\supseteq N[A]$ after contractions, thus $G'$ is made into a threshold graph.

Conversely, suppose that $G'$ contains at most $|X|-t$ edges $F$ whose contraction results in a threshold graph. $K$ is partitioned into $r$ witness sets $W_1,\cdots,W_r$, and vertices in each $W_j$ are contracted to a single vertex $w_j$. Note that $\Sigma_j(|W_j|-1)\leq|F|$, we have $r\geq \Sigma_j|W_j|-|F|\geq t$. Since $|F|\leq |X|$, there exists some $1\leq p\leq |X|+1$ such that none vertex in $I_p$ is incident with any edge in $F$. Moreover, there exist vertices $a\in A$ and $b\in B$ that are not incident with any edge in $F$. By the construction, each vertex in $K$ is adjacent to $b$ which is non-adjacent to $a$, and $a$ is adjacent to all vertices in $I_p$. Thus to modify $G'$ into a threshold graph, each vertex in $\{w_1,\cdots,w_r\}$ must be adjacent to all vertices in $I_p=\{v_{p,1},\cdots,v_{p,n}\}$ after contractions, implying that $W_j$ dominates $\{y_i,\cdots,y_n\}$ in $G$ for each $1\leq j\leq r$. Therefore $(G,t)$ is a yes-instance of {\sc One-Sided Domatic Number}.
\end{proof}
\vskip 0.25cm

The above reduction implies the NP-hardness of {\sc Threshold Contraction}.

\begin{theorem}\label{thresholdnpc}
{\sc Threshold Contraction} is NP-complete even for split graphs.
\end{theorem}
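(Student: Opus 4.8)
The plan is to prove the two halves of NP-completeness separately, recycling the construction from the proof of Theorem~\ref{thresholdincompress} for the hardness direction.

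For membership in NP, the obvious certificate for a yes-instance $(G,k)$ is a set $F\subseteq E(G)$ with $|F|\le k$. Given $F$ one first contracts its edges (polynomial time) and then tests in polynomial time whether $G/F$ is a threshold graph; this test can be done, for instance, by repeatedly deleting an isolated vertex or a universal vertex and accepting iff the graph empties out, or equivalently by checking that $G/F$ contains none of the forbidden induced subgraphs $2K_2$, $C_4$, $P_4$ (each of bounded size). Hence {\sc Threshold Contraction} on split graphs lies in NP.

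For NP-hardness I would observe that the construction in the proof of Theorem~\ref{thresholdincompress} is, in particular, a polynomial-time many-one reduction: from a bipartite graph $G=(X,Y;E)$ and integer $t$ it builds, in polynomial time, a split graph $G'$ of polynomial size together with the bound $k=|X|-t$, and that proof already establishes that $(G,t)$ is a yes-instance of {\sc One-Sided Domatic Number} iff $(G',|X|-t)$ is a yes-instance of {\sc Threshold Contraction} on split graphs. So it suffices to exhibit an NP-hard problem that reduces to {\sc One-Sided Domatic Number}. Composing with the polynomial-time reduction in the proof of Theorem~\ref{osdnincompress} (from {\sc One-Sided Dominating Set} to {\sc One-Sided Domatic Number}), it is in turn enough that {\sc One-Sided Dominating Set} is NP-hard; and this is immediate, since, reading $X$ as a family of subsets of the ground set $Y$ (with $x\in X$ ``containing'' $y\in Y$ iff $xy\in E$), a $t$-subset of $X$ dominating $Y$ is precisely a cover of $Y$ by $t$ of these sets. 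Thus {\sc One-Sided Dominating Set} is exactly {\sc Set Cover}, which is NP-hard.

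Chaining the three polynomial-time reductions yields a polynomial-time reduction from {\sc Set Cover} to {\sc Threshold Contraction} on split graphs, which together with membership in NP gives the theorem. There is no real obstacle here; the only points requiring a moment of care are (i) to keep the ``parameter'' of these problems, namely the quantity $|X|$, conceptually separate from the instance size, so that the FPT running times recorded earlier do not conflict with NP-hardness (on the hard instances $|X|$ grows with the input), and (ii) to note routinely that the normalization ``every vertex of $X$ has a neighbour in $Y$'' for {\sc Set Cover}/{\sc One-Sided Dominating Set} — achieved by discarding useless vertices — passes unchanged through both reductions.
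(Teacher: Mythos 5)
Your proof is correct and follows the same route as the paper, which simply asserts Theorem~\ref{thresholdnpc} from the reduction in the proof of Theorem~\ref{thresholdincompress} without explicitly verifying membership in $NP$ or the $NP$-hardness of the source problem; you fill in both gaps cleanly, in particular with the (correct) observation that {\sc One-Sided Dominating Set} is just {\sc Set Cover} in disguise, so that the chain of polynomial-time reductions through Theorems~\ref{osdnincompress} and~\ref{thresholdincompress} yields $NP$-hardness of {\sc Threshold Contraction} on split graphs. No issues.
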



\section{Concluding Remarks}

In this paper we have proved that the following two edge contraction problems are FPT but admit no polynomial kernels unless $NP\subseteq coNP/poly$: {\sc Split Contraction}, and {\sc Threshold Contraction} on split graphs. We note that {\sc Threshold Contraction} on split graphs is essentially the same as {\sc Cograph Contraction} on split graphs. The {\sc Cograph Contraction} problem is claimed to be FPT for general graphs by a recent paper (Lokshtanov \emph{et al.}~\cite{lokshtanov13}) using the notion of rankwidth. We have provided a conceptually simpler and faster algorithm for the problem when the input is a split graph. Our result in Theorem~\ref{thresholdincompress} also implies that {\sc Cograph Contraction} admits no polynomial kernel unless $NP\subseteq coNP/poly$.

Although we have proved that contracting general graphs to split graphs and contracting split graphs to threshold graphs are both $FPT$, our results do not imply the fixed-parameter tractability for {\sc Threshold Contraction} on general graphs since our algorithm for {\sc Split Contraction} does not find all split graphs obtained by contractions. It remains open whether {\sc Threshold Contraction} is FPT for general graphs.



\end{document}